\newtheorem{theorem}{Theorem}
\newtheorem{definition}{Definition}
\newtheorem{lemma}{Lemma}
\newtheorem{remark}{Remark}
\newtheorem{proof}{Proof}
\journal{Elsevier}
\begin{document}

\begin{frontmatter}

\title{Model Predictive Control for T-S Fuzzy Markovian Jump Systems Using Dynamic Prediction Optimization}

\author[a,b]{Bin Zhang}
\ead{binzhangusst@163.com}

%
%
\cortext[cor1]{Corresponding author}

\address[a]{Department of Automation, Shanghai Jiao Tong University, Shanghai 200240, China}
\address[b]{Key Laboratory of System Control and Information Processing, Ministry of Education of China, Shanghai 200240, China}

\begin{abstract}
In this paper, the model predictive control (MPC) problem is investigated for the constrained discrete-time Takagi-Sugeno fuzzy Markovian jump systems (FMJSs) under imperfect premise matching rules. To strike a balance between initial feasible region, control performance, and online computation burden, a set of mode-dependent state feedback fuzzy controllers within the frame of dynamic prediction optimizing (DPO)-MPC is delicately designed with the perturbation variables produced by the predictive dynamics. The DPO-MPC controllers are implemented via two stages: at the first stage, terminal constraints sets companied with feedback gain are obtained by solving a ``min-max'' problem; at the second stage, and a set of perturbations is designed felicitously to enlarge the feasible region. Here, dynamic feedback gains are designed for off-line using matrix factorization technique, while the dynamic controller state is determined for online over a moving horizon to gradually guide the system state from the initial feasible region to the terminal constraint set. Sufficient conditions are provided to rigorously ensure the recursive feasibility of the proposed DPO-MPC scheme and the mean-square stability of the underlying FMJS. Finally, the efficacy of the proposed methods is demonstrated through a robot arm system example.
\end{abstract}
\begin{keyword}
Marko jump systems; Model predictive control; Takagi-Sugeno fuzzy model; dynamic prediction optimization;  mean-square stability.
\end{keyword}
\end{frontmatter}
\section{Introduction}\label{sec:1}

With the rapid progress in science and technology, contemporary industrial control systems have grown increasingly intricate, characterized by a proliferation of nonlinearities and uncertainties. The Takagi-Sugeno (T-S) fuzzy modeling approach has emerged as a valuable tool for approximating these complex systems, making T-S fuzzy models integral to the field of control in recent decades. Noteworthy advancements in this area have been documented in various studies \citep{Tran.V21_H,Du.Z21_A,Wang.L18_O,Li.C19_A}. On the other hand, Markovian jump systems (MJSs) have garnered significant attention from the systems science and engineering community due to their efficacy in describing systems experiencing abrupt changes or random fluctuations. By integrating mode variation with T-S fuzzy rules, fuzzy Markovian jump systems (FMJSs) have captured the interest of researchers across disciplines, particularly in the control domain, being a class of stochastic nonlinear dynamic systems. Previous research efforts in this realm include works such as \citep{Xue.M20_E,You.Z20_R,Zeng.P21_E}. It is worth noting that many existing studies predominantly utilize the perfectly matched premises (PMP) framework, also known as parallel distributed compensation (PDC), rather than the imperfectly matched premises (IMP) approach. This preference is primarily attributed to the computational complexity associated with IMP, despite its potential for reduced conservatism, as highlighted in \citep{Lam.H18_A}.

As a cutting-edge modern intelligent control technology, model predictive control (MPC) has demonstrated immense potential in practical applications across various fields \citep{Jin.B20_D,Sun.Z18_R,Zou.Y15_C,Song.Y22_E}. This is primarily due to its significant advantages in efficiently managing optimization problems involving multiple variables and constraints. Numerous research endeavors have been dedicated to addressing MPC challenges in T-S fuzzy systems \citep{Dong.Y20_E,Dong.Y20_E1,Li.F19_F,Hu.J19_O}. For instance, in \citep{Tang.X18_O}, the observer-based output feedback MPC problem was explored for a T-S fuzzy system with data loss. Moreover, in \citep{Teng.L18_E}, a Razumikhin approach was introduced for time-delay fuzzy systems, alongside the provision of two robust MPC algorithms. Regrettably, limited results regarding the MPC problem of FMJSs \citep{Wen.J12_F} have been documented in existing literature, primarily due to the challenges associated with ensuring algorithm feasibility in the simultaneous presence of jump modes and fuzzy rules.

From a practical standpoint, computational burden and performance are consistently pivotal issues for MPC strategies, potentially influencing their integration into industrial engineering applications \citep{Dai.L20_F}, particularly when dealing with a large number of fuzzy controller rules within the IMP category. In the case of online MPC strategies, the continual and substantial computation load over a moving horizon, coupled with the strict requirement for the initial system state to belong to
the terminal constraint set around the origin, pose significant challenges to MPC's practicality. On the other hand, off-line MPC strategies excel in reducing online computational complexity but face stability concerns, especially when dealing with model uncertainties and random variations \citep{Hu.J19_O,Song.Y19_R}. Despite extensive research efforts dedicated to studying both online and off-line MPC strategies, effectively addressing the aforementioned performance issues remains a substantial challenge. Consequently, establishing a comprehensive ``off-line to online" design framework for MPC becomes not only essential but also pragmatic to preserve the strengths of both off-line and online approaches while overcoming their limitations.

To this end, an efficient MPC algorithm proposed in \citep{Kouvaritakis.B00_E,Kouvaritakis.B02_W} strikes a balance between off-line and on-line computation, which requires a lower on-line computation than full on-line MPC strategy in \citep{Song.Y19_R,Zhang.B19_A}. Meanwhile, the convex formulation for dynamic prediction optimizing (DPO) on MPC, as discussed in \citep{Cannon.M05_O,Lu.J19_S}, enhances predictive control by aligning controller state dynamics with system state evolution. Building upon this, in \citep{Nguyen.H20_O}, a new approach was proposed to improve the optimality of DPO-MPC apparently without increasing much online computational load. However, addressing hard constraints in DPO-MPC for complex dynamic systems like FMJSs remains relatively unexplored, motivating the focus of this paper.

In pursuit of enhanced efficiency, a novel MPC algorithm that combines elements of both off-line and online strategies was introduced in \citep{Kouvaritakis.B00_E,Kouvaritakis.B02_W}. This algorithm, requiring less online computation compared to full online MPC strategies \citep{Song.Y19_R,Zhang.B19_A}, offers an additional DoF to extend the initial feasible region. Subsequently, the concept of dynamic prediction optimizing (DPO) within MPC was explored through a convex formulation in works such as \citep{Cannon.M05_O,Lu.J19_S}. This approach empowers the dynamics guiding the predicted controller state to evolve in alignment with the projected system state. Building upon this progress, \citep{Nguyen.H20_O} introduced a novel method to enhance the optimality of DPO-MPC without significantly increasing online computational overhead. However, despite the practical significance of dynamic systems like FMJSs, research on DPO-MPC problems with stringent constraints is still nascent. The identified knowledge gap forms the primary motivation for our current study.

Our objective in this paper is to present a comprehensive ``off-line design and online synthesis" DPO-MPC scheme tailored for a specific  discrete-time stochastic FMJSs with constraints. The primary contributions of this paper are detailed as
follows.  {\it 1)The establishment of an innovative optimizing prediction dynamics framework for MPC design, specifically tailored for T-S FMJSs with hard constraints, marking a pioneering effort in this domain; 2)utilization of mathematical analysis techniques, such as variable substitution, matrix decomposition, and inequality manipulation, to address the non-convexity challenges arising from coupled variables and dynamic state variable introductions; 3)implementation of DPO technology to devise a more adaptable IPM control strategy, effectively mitigating the potential computational overload associated with IPM; 4)development of an ``off-line design and online synthesis'' scheme, as opposed to the complete online MPC scheme \citep{Song.Y19_R,Zhang.B19_A}, aimed at striking a harmonious balance between computational complexity, initial feasible region and control efficacy. This recursive algorithm structure ensures the requisite mean-square stability of the closing-loop MJSs, thus enhancing overall control performance and system stability.}

The remaining sections of the paper are organized as follows. Section \ref{sec:2} presents the formulation of the addressed system model and the DPO-MPC scheme. In Section \ref{sec:3}, the determination of the terminal constraint set and the design of the corresponding control parameters are discussed. The design scheme of perturbation is offered in Section \ref{sec:4} with respect to the ``off-line to online synthesis" approach. A single-link robot arm system example is presented in Section \ref{sec:5} and the paper is concluded in Section \ref{sec:6}.

 \textbf{Notation.} $\mathbb{R}^m$ represents the $m$ dimensional Euclidean space. $\mathcal{I}_{N}$ denotes a sequence from $1$ to $N$, i.e.$\{1,2,\ldots,N\}$. $x_{s+\ell|s}$ and $u_{s+\ell|s}$ stands for the predicted state and control move at the future time instant $s+\ell$ according to the information at the current time instant $s$, respectively, and $\ast_{s|s}\doteq\ast_{s}$. ${\rm col}\{\bullet\}$ and ${\rm diag}\{\bullet\}$ indicate a column block matrix and a diagonal matrix, respectively. $[\,\cdot\,]_{\hbar}$ denotes the $\hbar$th row of a vector.

 \section{Problem Formulation and Preliminaries}\label{sec:2}
\subsection{System Model}\label{sec:2-1}
Let us consider a discrete-time FMJSs using IF-THEN rules:\\
\emph{Plant Rule} $\hbar$: IF $f^{(1)}(x_s)$ is $\mathcal{F}_{\hbar}^{(1)}$, $f^{(2)}(x_s)$ is $\mathcal{F}_{\hbar}^{(2)}$, $\ldots$, and $f^{(r)}(x_s)$ is $\mathcal{F}_{\hbar}^{(r)}$,
\begin{equation} \label{eq:2-1}
\text{THEN} ~~\begin{aligned}
     x_{s+1}=A_{\zeta_{s}, \hbar}x_{s}+B_{\zeta_{s}, \hbar}u_{s}
\end{aligned}
\end{equation}
where $\hbar\in\mathcal{I}_T=\{1,2,\ldots,t\}$, $t$ symbolizes the number of IF-THEN rules;$f(x_s)=[f^{(1)}(x_s),f^{(2)}(x_s),$ $\cdots, f^{(r)}(x_s)]^T$ represents the premise variable of the system and $\{\mathcal{F}_{\hbar}^{(\alpha)}, \alpha=1,2,\ldots,r\}$ represents a fuzzy set; $x_s\in{\mathbb{R}}^{n_x}$ and $u_s\in{\mathbb{R}}^{n_u}$ represent the system state and the system control input, respectively. $A_{\zeta_{s}, \hbar}$ and $B_{\zeta_{s}, \hbar}$ are the known constant matrices with appropriate dimensions.

The stochastic process $\zeta_{s}$ denotes a homogeneous Markov chain in a finite state space $\mathcal{I}_N$ with the transition probability
\begin{align}\label{eq:2-2}
p_{\imath\jmath}=\it{Prob}\big\{\zeta_{s+1}=\jmath|\zeta_{s}=\imath\big\},\quad\forall \imath,\jmath\in{\mathcal{I}_N}
\end{align}
where $0\leq p_{\imath\jmath}\leq 1$. $\mathbb{P}=[p_{\imath\jmath}]$ is called the transition probability matrix.
The initial state $x_{0}$ and mode $\zeta_{0}$ are indicated. For each specific $\zeta_{s}=\imath\in\mathcal{I}_N$, system matrices are represented by $A_{\imath, \hbar}$, $B_{\imath, \hbar}$.

Aided by the T-S fuzzy approach, we establish the overall FMJSs with $\zeta_s=\imath$ as follows:
\begin{equation} \label{eq:2-3}
\begin{aligned}
     x_{s+1}=A_{\imath\theta}x_{s}+B_{\imath\theta}u_{s}
\end{aligned}
\end{equation}
where
\begin{align*}
&A_{\imath\theta}=\sum_{\hbar=1}^t\theta_\hbar(f(x_s))A_{\imath, \hbar},~B_{\imath\theta}=\sum_{\hbar=1}^t\theta_\hbar(f(x_s))B_{\imath, \hbar}\\
&\theta_\hbar(f(x_s))\!=\!\frac{\mathcal{F}_{\hbar}(f(x_s))}{\sum_{\hbar=1}^t\mathcal{F}_{\hbar}(f(x_s))}, \mathcal{F}_{\hbar}(f(x_s))\!=\!\prod_{\alpha=1}^r\!\mathcal{F}_{\hbar}^{(\alpha)}(f^{(\alpha)}(x_s))
\end{align*}
with $\mathcal{F}_{\hbar}^{(\alpha)}(f^{(\alpha)}(x_s))$ referring to the grade of membership of $f^{(\alpha)}(x_s)$ in $\mathcal{F}_{\hbar}^{(\alpha)}$. And $\theta_\hbar(f(x_s))$ implies the standard membership function of rule $\hbar$. For $\forall \hbar\in\mathcal{I}_T$, we have $\theta_\hbar(f(x_s))\geq0$ and $\sum_{\hbar=1}^t\theta_\hbar(f(x_s))=1$. For notational brevity, $\theta_\hbar$ is denoted as $\theta_\hbar(f(x_s))$ for the subsequent analysis.

The FMJSs will be affected by the constraints imposed by engineering practices on inputs and states as follows:
\begin{align}
\big|[u_s]_{e}\big|&\leq[\breve{u}]_e,\;\;\;e\in\mathcal{I}_{n_u}\label{eq:2-4}\\
\big|[\Phi]_fx_s\big|&\leq[\breve{x}]_f,\;\;\;f\in\mathcal{I}_{n_x}\label{eq:2-5}
 \end{align}
where $\breve{u}$ and $\breve{x}$ are the known positive scalars, $\Phi$ is the matrix according to actual needs.
\subsection{DPO-Based Control Law Design}\label{sec:2-2}
Considering the remarkable superiority demonstrated in terms of the applicability complexity, robustness and flexibility to the PMP method, the IPM method is used to design a fuzzy controller for discrete-time FMJSs \eqref{eq:2-3}. However, it is very likely to result in a huge computation burden due to the number of rules for the IPM fuzzy controller.
To alleviate the computational load online and maintain effective system performance, we incorporate a perturbation generated by a dynamic controller into the mode-dependent state feedback fuzzy control law within the DPO-MPC framework. This integration results in the fuzzy predictive dynamic controller law outlined below:\\
\emph{Controller Rule} $\upsilon$: IF $g^{(1)}(x_{s+\ell|s})$ is $\mathcal{H}_{\upsilon}^{(1)}$, $g^{(2)}(x_{s+\ell|s})$ is $\mathcal{H}_{\upsilon}^{(2)}$, $\ldots$, and $g^{(k)}(x_{s+\ell|s})$ is $\mathcal{H}_{\upsilon}^{(k)}$,
\begin{equation}\label{eq:2-6}
\text{THEN} ~~\begin{aligned}
              &u_{s+\ell|s}=K_{\imath, \upsilon}x_{s+\ell|s}+c_{s+\ell|s, \upsilon}, \ell=0,1,2,\ldots\\
              &c_{s+\ell|s,\upsilon}=\mathcal{C}_{\imath,\upsilon}\eta_{s+\ell|s}\\
              &\eta_{s+\ell+1|s}=\mathcal{A}_{\imath,\upsilon}\eta_{s+\ell|s}
              \end{aligned}
\end{equation}
where $\upsilon\in\mathcal{I}_V=\{1,2,\ldots,v\}$ with $v$ symbolizing the number of inference rules; $\{\mathcal{H}_{\upsilon}^{(\kappa)}, \kappa=1,2,\ldots,k\}$ represents a fuzzy set; $g(x_{s+\ell|s})=[g^{(1)}(x_{s+\ell|s}),g^{(2)}(x_{s+\ell|s}),$ $\cdots, g^{(k)}(x_{s+\ell|s})]^T$ represents the premise variable of the controller.

From the fuzzy predictive dynamic controller described by \eqref{eq:2-6}, the mode-dependent fuzzy feedback gain $K_{\imath, \upsilon}$ is designed on terminal constraint set. $\{c_{s+\ell|s,\upsilon}\}$ represents perturbation variables for fine-tuning control inputs, which are generated by the dynamic fuzzy controller. $\eta_{s|s}\triangleq\eta_s\in\mathbb{R}^{n_x}$ represents the dynamic controller state to be determined. The estimator gain $\mathcal{A}_{\imath,\upsilon}$ and dynamic feedback gain $\mathcal{C}_{\imath,\upsilon}$ are needed to be designed. Then, the integrated T-S fuzzy predictive controller with system mode $\zeta_s=i$ is given by
\begin{align}
&u_{s+\ell|s}=K_{\imath\vartheta}x_{s+\ell|\imath}+c_{s+\ell|s, \vartheta}\label{eq:2-7}\\
&c_{s+\ell|s, \vartheta}=\mathcal{C}_{\imath\vartheta}\eta_{s+\ell|s}\label{eq:2-8}\\
&\eta_{s+\ell+1|s}=\mathcal{A}_{\imath\vartheta}\eta_{s+\ell|s}\label{eq:2-9}
\end{align}
where
\begin{align*}
&K_{\imath\vartheta}=\sum_{\upsilon=1}^v\vartheta_\upsilon(g(x_{s+\ell|s}))K_{\imath, \upsilon}, \mathcal{C}_{\imath\vartheta}=\sum_{\upsilon=1}^v\vartheta_\upsilon(g(x_{s+\ell|s}))\mathcal{C}_{\imath,\upsilon}, \mathcal{A}_{i\vartheta}=\sum_{\upsilon=1}^v\vartheta_\upsilon(g(x_{s+\ell|s}))\mathcal{A}_{\imath,\upsilon}, \\ &\vartheta_\upsilon(g(x_{s+\ell|s}))=\frac{\mathcal{H}_{\upsilon}(g(x_{s+\ell|s}))}{\sum_{\upsilon=1}^v\mathcal{H}_{\upsilon}(g(x_{s+\ell|s}))},
\mathcal{H}_{\upsilon}(g(x_{s+\ell|s}))=\prod_{\kappa=1}^k\mathcal{H}_{\upsilon}^{(\kappa)}(g^{(\kappa)}(x_{s+\ell|s})).
\end{align*}
 $\mathcal{H}_{\upsilon}(g(x_{s+\ell|s}))$ refers to the grade of membership of $g^{(\kappa)}(x_{s+\ell|s})$ in $\mathcal{H}_{\upsilon}^{(\kappa)}$. And $\vartheta_\upsilon(g(x_{s+\ell|s}))$ implies the standard membership function of rule $\upsilon$. For $\forall \upsilon\in\mathcal{I}_V$, we have $\vartheta_\upsilon(g(x_{s+\ell|s}))\geq0$ and $\sum_{\upsilon=1}^v\vartheta_\upsilon(g(x_{s+\ell|s}))=1$. For notational brevity,  $\vartheta_\upsilon$ is denoted as its abbreviation of $\vartheta_\upsilon(g(x_{s+\ell|s}))$.
\begin{remark}
The main working principle of MPC is to enlarge the initial feasible region as well as improve the time efficiency of online computing. The initial feasible region is a certain set that contains the initial system state at instant $s=0$, specially, in this paper, it is defined as an ellipsoid set $\Sigma_{x,\zeta_0}$, and thus $x_0\in\Sigma_{x,\zeta_0}$. To this end, a DPO-MPC strategy is introduced in line with \citep{Cannon.M05_O}.
The control input, determined by the combination of $K_{\imath\vartheta}$ and certain perturbations $c_{s+\ell|s, \vartheta}$, aims to guide the system state from the initial feasible region towards the terminal constraint set. Unlike approaches that rely on perturbation sequences of finite length, which are often arbitrarily set, utilizing prediction dynamics for perturbation determination can objectively expand the initial feasible region. This is because the variables involved in the optimization of the initial feasible region under MPC with prediction dynamics, such as $\mathcal{A}_{\imath,\upsilon}$ and $\mathcal{C}_{\imath,\upsilon}$, offer more degrees of freedom compared to those in the effective MPC scheme cited in \citep{Kouvaritakis.B00_E}.
On the other hand, most of variables to determine the controllers are calculated off-line. To be specific, the variables such as such as $K_{\imath,\upsilon}$, $\mathcal{A}_{\imath,\upsilon}$, and $\mathcal{C}_{\imath,\upsilon}$ are calculated off-line, while only $\eta_{s|s}$ needs to be designed online (See from \textbf{OP4}). Thus, the proposed DPO-MPC law will not result in too much online computation burden.
\end{remark}

\subsection{Preliminaries}\label{sec:2-4}
Before delving into the main results, we provide some definitions to facilitate the derivation of subsequent results.
\begin{definition}
The autonomous FMJS $x_{s+1}=A_{\imath\theta}x_{s}$ is considered mean-square stable if, for any initial conditions $x_0\in\Sigma_{x,\zeta_0}$ and $\zeta_{0}\in\mathcal{I}_N$, the following condition is satisfied:
\begin{equation}\label{eq:2-9-1}
 \mathbb{E}_{x_0,\zeta_{0}}\left\{\sum_{s=0}^\infty x_s^Tx_s\right\}<\infty.
\end{equation}
\end{definition}
\begin{definition}\label{de2}
 For the FMJS \eqref{eq:2-3}, if the system state at time instant $s$ belongs to the set $\Sigma_s$ (i.e.~$x_s\in\Sigma_s$), and its future states under admissible fuzzy control also belong to this set (i.e.~$x_{s+\ell}\in\Sigma_s, \ell=1,2,3,\ldots$), then the set $\Sigma_s$ is termed a positive control invariant set.
\end{definition}

The primary goal of this paper is to design a mode-dependent fuzzy predictive dynamic control law for the FMJS \eqref{eq:2-3} subject to hard constraints. Specifically, for any $x_0\in\Sigma_{\zeta_0}$ (referred to as the initial feasible region, which will be defined in subsequent discussions), our aim is to solve an optimization problem at each time instant $s$. This optimization problem seeks to determine the mode-dependent fuzzy feedback gain $K_{\imath, \upsilon}$, estimator gain $\mathcal{A}_{\imath,\upsilon}$, dynamic feedback gain $\mathcal{C}_{\imath,\upsilon}$ and the optimization controller state $\eta_{s}$, ensuring the mean-square stability of the FMJS \eqref{eq:2-3}. This optimization problem is formulated as follows:
\begin{numcases}
 {\textbf{OP1}}\min_{u_{s+\ell|s}, \ell=0,1,2,\ldots}\quad\max_{A_{\imath\theta},B_{\imath\theta},~\imath\in\mathcal{I}_N}\quad J^{\infty}_s\nonumber\\
     \text{s.t.}~x_{s+\ell+1|s}=A_{\imath\theta}x_{s+\ell|s}+B_{\imath\theta}u_{s+\ell|s}, \!\label{eq:3-4-1}\\
     ~~\big|[u_{s+\ell|s}]_{e}\big|\leq[\breve{u}]_e, e\in\mathcal{I}_{n_u}, \!\label{eq:3-4-2}\\
      \big|[\Phi]_fx_{s+\ell|s}\big|\leq[\breve{x}]_f, f\in\mathcal{I}_{n_x},  \!\label{eq:3-4-3}\\
      \mathbb{E}\{\Delta V_{s+\ell|s}\}\leq\!\!-\left(\|x_{s+\ell|s}\|^{2}_{S}\!+\!\|u_{s+\ell|s}\|^{2}_{R}\right)~~~~~~~\!\!\label{adeq:3-4-4}
\end{numcases}
where $ J^{\infty}_s\triangleq\mathbb{E}\Big\{\sum_{\ell=0}^{\infty}\big(\|x_{s+\ell|s}\|^{2}_{S}+\|u_{s+\ell|s}\|^{2}_{R}\big)\Big\}$,  $S>0$ and $R>0$represent known weighting matrices. The term $\mathbb{E}\{\Delta V_{s+\ell|s}\}$  is defined as the expected difference in $V_{s+\ell|s}$, where $V_{s+\ell|s}$ 	
  represents a Lyapunov-like function. We assume that $V_{s+\ell|s}$ satisfies condition \eqref{adeq:3-4-4}, also known as the terminal cost function condition \citep{Zhang.B19_A}. This condition aids in constructing the upper bound of the objective function and achieving mean-square stability for the closed-loop system.

In the subsequent steps, our objective is to devise the fuzzy control law \eqref{eq:2-7}-\eqref{eq:2-9} by addressing {\bf OP1}. Since directly minimizing the cost function over an infinite horizon, especially considering mode jumps and fuzzy rules, can be challenging, we opt to minimize a certain upper bound instead. It's worth noting that the fuzzy control law \eqref{eq:2-7} consists of two components: one related to the mode-dependent fuzzy feedback gain $K_{\imath\vartheta}$ and the other dependent on $c_{s+\ell|s,\vartheta}$ determined by prediction dynamics. Therefore, our approach involves breaking down the optimization problem \textbf{OP1} into several auxiliary optimization problems to achieve our objective.

\section{Optimization Problem in Terminal Constraint Set}\label{sec:3}
\subsection{Control Law in Terminal Constraint Set}\label{sec:3-1}
To begin, we establish the definition of a set known as the terminal constraint set as follows:
\begin{align}\label{eq:3-1}
\Sigma_{x,\imath}\triangleq\left\{x_{s+\ell|s}|x^T_{s+\ell|s}P_{\imath\theta}x_{s+\ell|s}\leq\sigma\right\},
\end{align}
where the scalar $\sigma>0$, the matrix $P_{\imath\theta}\triangleq\sum_{\hbar=1}^t\theta_\hbar P_{\imath,\hbar}$ with $P_{\imath,\hbar}>0, \forall~\imath\in\mathcal{I}_N$. In the terminal constraint set, the mode-dependent state feedback controller is constructed without using perturbation variables by:
\begin{equation}\label{eq:3-2}
u^{\ast}_{s+\ell|s}=K_{\imath\vartheta}x_{s+\ell|s}, ~\imath\in\mathcal{I}_N.
\end{equation}
Applying \eqref{eq:3-2} to FMJS \eqref{eq:2-3} gives rise to
\begin{equation}\label{eq:3-3}
\begin{aligned}
x_{s+\ell+1|s}&=\sum_{\hbar=1}^t\sum_{\upsilon=1}^v\theta_\hbar\vartheta_\upsilon(A_{\imath, \hbar}+B_{\imath, \hbar}K_{\imath, \upsilon})x_{s+\ell|s}\triangleq\tilde{A}_{\imath\theta\vartheta}x_{s+\ell|s}
\end{aligned}
\end{equation}
With regard to the cost function $J^{\infty}_s$, a min-max problem is employed to design a set of mode-dependent fuzzy controllers, outlined as follows:
\begin{align}\label{eq:3-4}
\min_{K_{\imath, \upsilon}, \upsilon=1,2,\ldots,v,~ \imath\in\mathcal{I}_N}\quad\max_{A_{\imath\theta},B_{\imath\theta},~\imath\in\mathcal{I}_N}\quad \hat{J}_s,
\end{align}
where
$\hat{J}_s=\mathbb{E}\Big\{\sum_{\ell=0}^{\infty}\big(\|x_{s+\ell|s}\|^{2}_{S}+\|u^{\ast}_{s+\ell|s}\|^{2}_{R}\big)\Big\}$.

\subsection{Terminal Cost Function}\label{sec:3-2}
Considering the control law \eqref{eq:3-2}, condition \eqref{adeq:3-4-4} in \textbf{OP1} can be reformulated as follows:
\begin{align}\label{eq:3-4-4}
\mathbb{E}\{\Delta V_{s+\ell|s}\}\!\leq\!\!-\left(\|x_{s+\ell|s}\|^{2}_{S}\!+\!\|u^{\ast}_{s+\ell|s}\|^{2}_{R}\right).
\end{align}

We will now endeavor to identify the solvability condition for the terminal constraint \eqref{eq:3-4-4}. To begin, we introduce the following lemma, which is essential for deriving our main results.

\begin{lemma}\label{lm2}
Let the $S$ and $R$ be given positive matrix.  Assume that there exist matrices $W_{\imath,\hbar}>0$, a scalar $\sigma>0$, a set of matrices $\tilde{K}_{\imath, \upsilon}$ and invertible matrices $G_{\imath,\upsilon}$ , such that the following inequalities holds:
\begin{align}\label{eq:3-5}
\begin{bmatrix}
-\mathbb{G}_{\imath,\hbar\upsilon} & \ast & \cdots & \ast & \ast & \ast\\
\sqrt{p_{\imath1}}\Pi_{\imath, \hbar\upsilon} & -W_{1,\lambda} & \cdots  & \ast & \ast & \ast \\
\vdots  &\vdots & \ddots & \ast & \ast & \ast \\
\sqrt{p_{\imath N}}\Pi_{\imath, \hbar\upsilon} & 0 &\cdots & -W_{N,\lambda}& \ast & \ast \\
S&0 &\cdots & 0 & -\sigma S& \ast \\
R\tilde{K}_{\imath, \upsilon} & 0 &\cdots& 0 & 0 & -\sigma R\\
\end{bmatrix}< 0, \forall \imath\in\mathcal{I}_N, \hbar,\lambda\in\mathcal{I}_T
\end{align}
where
\begin{align*}
\mathbb{G}_{\imath,\hbar\upsilon}=G_{\imath,\upsilon}^T+G_{\imath,\upsilon}-W_{\imath,\hbar},~\Pi_{\imath, \hbar\upsilon}=A_{\imath, \hbar}G_{\imath,\upsilon}+B_{\imath, \hbar}\tilde{K}_{\imath, \upsilon}.
\end{align*}
Then, condition \eqref{eq:3-4-4} is satisfied, and the mode-dependent controller gain is calculated by
\begin{align}\label{eq:3-7}
K_{\imath\vartheta}=\sum_{\upsilon=1}^v\vartheta_\upsilon K_{\imath, \upsilon}, K_{\imath, \upsilon}=\tilde{K}_{\imath, \upsilon}G_{\imath,\upsilon}^{-1}.
\end{align}
\end{lemma}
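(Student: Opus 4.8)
The plan is to establish a quadratic terminal cost $V_{s+\ell|s} = x_{s+\ell|s}^T P_{\imath\theta} x_{s+\ell|s}$ (with $P_{\imath\theta}$ as in \eqref{eq:3-1}) and show that the LMIs \eqref{eq:3-5}, after suitable congruence and Schur-complement manipulations, are exactly the statement that $\mathbb{E}\{\Delta V_{s+\ell|s}\} \le -(\|x_{s+\ell|s}\|_S^2 + \|u^\ast_{s+\ell|s}\|_R^2)$ holds along trajectories of the closed-loop system \eqref{eq:3-3}. First I would write $\mathbb{E}\{\Delta V_{s+\ell|s}\} = \mathbb{E}\{x_{s+\ell+1|s}^T P_{\jmath\theta} x_{s+\ell+1|s}\} - x_{s+\ell|s}^T P_{\imath\theta} x_{s+\ell|s} = x_{s+\ell|s}^T\big(\tilde A_{\imath\theta\vartheta}^T \bar P_{\imath\theta} \tilde A_{\imath\theta\vartheta} - P_{\imath\theta}\big)x_{s+\ell|s}$, where $\bar P_{\imath\theta} = \sum_{\jmath=1}^N p_{\imath\jmath} P_{\jmath\theta}$ is the transition-weighted next-step matrix and $\tilde A_{\imath\theta\vartheta} = A_{\imath\theta} + B_{\imath\theta}K_{\imath\vartheta}$. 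The required inequality is then the matrix inequality $\tilde A_{\imath\theta\vartheta}^T \bar P_{\imath\theta}\tilde A_{\imath\theta\vartheta} - P_{\imath\theta} + S + K_{\imath\vartheta}^T R K_{\imath\vartheta} \le 0$; it suffices to prove it with $P$ replaced by $\sigma$-scaled inverses, i.e. to work with $W_{\imath,\hbar} \doteq \sigma P_{\imath,\hbar}^{-1}$.

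Next I would convert this into the form \eqref{eq:3-5}. Apply a Schur complement to pull out the quadratic term $\tilde A^T \bar P \tilde A$, producing a block with $-\bar P^{-1}$ (or rather a block-diagonal $-\mathrm{diag}\{W_{\jmath,\lambda}/p_{\imath\jmath}\}$ once the sum over $\jmath$ is split using the weights $\sqrt{p_{\imath\jmath}}$), together with separate Schur blocks $-\sigma S$ and $-\sigma R$ for the stage-cost terms $S$ and $K^T R K$. Then perform the congruence transformation by $\mathrm{diag}\{G_{\imath,\upsilon}, I, \ldots, I\}$, substitute $\tilde K_{\imath,\upsilon} = K_{\imath,\upsilon} G_{\imath,\upsilon}$ so that $B_{\imath,\hbar}K_{\imath,\upsilon}G_{\imath,\upsilon} = B_{\imath,\hbar}\tilde K_{\imath,\upsilon}$, giving the off-diagonal blocks $\Pi_{\imath,\hbar\upsilon} = A_{\imath,\hbar}G_{\imath,\upsilon} + B_{\imath,\hbar}\tilde K_{\imath,\upsilon}$. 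The one non-routine algebraic point is the $(1,1)$ block: a direct congruence would yield $-G_{\imath,\upsilon}^T W_{\imath,\hbar}^{-1} G_{\imath,\upsilon}$, which is nonlinear; I would instead use the standard linearization bound $G^T W^{-1} G \ge G^T + G - W$ (valid since $(G-W)^T W^{-1}(G-W) \ge 0$ for $W>0$), so that replacing $-G^T W^{-1}G$ by $-\mathbb{G}_{\imath,\hbar\upsilon} = -(G_{\imath,\upsilon}^T + G_{\imath,\upsilon} - W_{\imath,\hbar})$ only makes the block larger; hence \eqref{eq:3-5} is a sufficient condition for the true inequality. Finally, the double sum over $\hbar$ (membership $\theta_\hbar$ in the plant) and $\upsilon$ (membership $\vartheta_\upsilon$ in the controller), together with the extra index $\lambda$ covering the membership appearing inside $P_{\jmath\theta} = \sum_\lambda \theta_\lambda P_{\jmath,\lambda}$ at the next step, is handled by the usual convexity argument: since all memberships are nonnegative and sum to one, if each vertex inequality indexed by $(\imath,\hbar,\upsilon,\lambda)$ holds, the convex combination holds, which gives the IMP-robust result; one unwinds the Schur complements in reverse to recover \eqref{eq:3-4-4}. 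The recovered controller gain is read off from the substitution as $K_{\imath\vartheta} = \sum_\upsilon \vartheta_\upsilon K_{\imath,\upsilon}$ with $K_{\imath,\upsilon} = \tilde K_{\imath,\upsilon} G_{\imath,\upsilon}^{-1}$, matching \eqref{eq:3-7}.

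The main obstacle I anticipate is bookkeeping rather than conceptual: correctly tracking which membership function lives at which time step so that the index $\lambda$ (for the $W_{\jmath,\lambda}$ blocks) is genuinely independent of $\hbar$ (for $\Pi_{\imath,\hbar\upsilon}$), and making sure the $\sqrt{p_{\imath\jmath}}$ factors distribute so that, after the Schur complement, the $\jmath$-th block contributes exactly $p_{\imath\jmath}\,\Pi^T W_{\jmath,\lambda}^{-1}\Pi$ to reconstruct $\tilde A^T \bar P \tilde A$ with $\bar P = \sum_\jmath p_{\imath\jmath} P_{\jmath\theta}$. The other delicate point is justifying that the linearization $G^T W^{-1} G \ge G^T + G - W$ preserves the direction of the inequality after it is embedded as a diagonal block of a larger matrix — this follows because enlarging a diagonal block of a symmetric matrix (in the positive-semidefinite order) enlarges the whole matrix, so $\eqref{eq:3-5} < 0$ still implies the exact (nonlinear) LMI is $<0$. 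Everything else — expanding $\Delta V$, taking conditional expectation over $\zeta_{s+1}$, and the Schur complements — is routine.
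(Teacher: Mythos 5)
Your proposal is correct and follows essentially the same route as the paper's proof: the quadratic Lyapunov-like function $x^T P_{\imath\theta}x$ with $W_{\imath,\hbar}=\sigma P_{\imath,\hbar}^{-1}$, the congruence with $\mathrm{diag}\{G_{\imath,\upsilon},I,\ldots,I\}$ and substitution $\tilde K_{\imath,\upsilon}=K_{\imath,\upsilon}G_{\imath,\upsilon}$, the linearization $G^T+G-W\le G^TW^{-1}G$ for the $(1,1)$ block, the Schur complements for the transition-weighted term and the stage costs, and the convex combination over the vertex indices $(\hbar,\upsilon,\lambda)$ with $\lambda$ tracking the next-step membership. You even correctly identify the quadratic form $(G-W)^TW^{-1}(G-W)\ge 0$ that justifies the linearization step, so no gap remains.
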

\begin{proof}
Choose a Lyapunov-like function as follows:
\begin{equation*}
V_{s+\ell|s}(x_{s+\ell|s})=x^{T}_{s+\ell|s}P_{\imath\theta}x_{s+\ell|s}, ~\imath\in\mathcal{I}_N.
\end{equation*}
Calculating the difference of $V_{s+\ell|s}$ along system \eqref{eq:3-3} and taking the mathematical expectation yields
\begin{align}\label{eq:3-9}
&\mathbb{E}\{\Delta V_{s+\ell|s}\}\nonumber\\
=&x^{T}_{s+\ell|s}\Big\{\sum^{N}_{\jmath=1}p_{\imath\jmath}\big[\tilde{A}_{\imath\theta\vartheta}^{T}P_{\jmath\theta_+}\tilde{A}_{\imath\theta\vartheta}\big]-P_{\imath\theta}\Big\}x_{s+\ell|s}
\end{align}
where
\begin{align*}
P_{\jmath\theta_+}=\sum_{\lambda=1}^t\theta_{\lambda+} P_{\jmath,\lambda}, ~\theta_{\lambda+}=\theta_\lambda(f(x_{s+\ell+1|s})).
\end{align*}

Notice that $W_{\imath,\hbar}>0$, we have
\begin{align*}
(G_{\imath,\upsilon}-W_{\imath,\hbar}^{-1})^TW_{\imath,\hbar}(G_{\imath,\upsilon}-W_{\imath,\hbar}^{-1})>0
\end{align*}
which leads to
\begin{align}\label{eq:3-10}
G_{\imath,\upsilon}^T+G_{\imath,\upsilon}-W_{\imath,\hbar}<G_{\imath,\upsilon}^TW_{\imath,\hbar}^{-1}G_{\imath,\upsilon}.
\end{align}
Then, keeping \eqref{eq:3-10} in mind, pre- and post-multiplying inequalities \eqref{eq:3-5} by $\text{diag}\{G_{\imath,\upsilon}^{-T}, \mathbb{I}_{n_x}, \cdots, \mathbb{I}_{n_x}, \mathbb{I}_{n_x}, \mathbb{I}_{n_u}\}$ and its transpose leads to
\begin{align}\label{eq:3-11}
\begin{bmatrix}
-W_{\imath,\hbar}^{-1} & \ast & \cdots & \ast & \ast & \ast\\
\sqrt{p_{\imath1}}\tilde{\Pi}_{\imath, \hbar\upsilon} & -W_{1,\lambda} & \cdots  & \ast & \ast & \ast \\
\vdots  &\vdots & \ddots & \ast & \ast & \ast \\
\sqrt{p_{\imath N}}\tilde{\Pi}_{\imath, \hbar\upsilon} & 0 &\cdots & -W_{N,\lambda}& \ast & \ast \\
S&0 &\cdots & 0 & -\sigma S& \ast \\
RK_{\imath, \upsilon} & 0 &\cdots& 0 & 0 & -\sigma R\\
\end{bmatrix}< 0
\end{align}
where
\begin{align*}
K_{\imath, \upsilon}=\tilde{K}_{\imath, \upsilon}G_{\imath,\upsilon}^{-1},~\tilde{\Pi}_{\imath, \hbar\upsilon}=A_{\imath, \hbar}+B_{\imath, \hbar}K_{\imath, \upsilon}.
\end{align*}
For convenience, we define the matrix on the left side of the above inequality \eqref{eq:3-11} as $\mathbb{Z}_{\imath, \hbar\upsilon\lambda}$. In light of standard membership function property of the underlying T-S FMJSs, it is easily seen from \eqref{eq:3-11} that
\begin{align}\label{eq:3-15}
\sum_{\lambda=1}^t\sum_{\hbar=1}^t\sum_{\upsilon=1}^v\theta_{\lambda+}\theta_\hbar\vartheta_\upsilon\mathbb{Z}_{\imath, \hbar\upsilon\lambda}<0.
\end{align}
Then, it follows from \eqref{eq:3-15} that
\begin{align}\label{eq:3-16}
\begin{bmatrix}
-W_{\imath\theta}^{-1} & \ast & \cdots & \ast & \ast & \ast\\
\sqrt{p_{\imath1}}\tilde{A}_{\imath\theta\vartheta} & -W_{1\theta_+} & \cdots  & \ast & \ast & \ast \\
\vdots  &\vdots & \ddots & \ast & \ast & \ast \\
\sqrt{p_{\imath N}}\tilde{A}_{\imath\theta\vartheta} & 0 &\cdots & -W_{N\theta_+}& \ast & \ast \\
S &0 &\cdots & 0 & -\sigma S& \ast \\
RK_{\imath\vartheta} & 0 &\cdots& 0 & 0 & -\sigma R\\
\end{bmatrix}< 0,
\end{align}
where
\begin{align*}
W_{\imath\theta}^{-1}=\sum_{\hbar=1}^t\theta_\hbar W_{\imath,\hbar}^{-1}, ~W_{\jmath\theta_+}=\sum_{\lambda=1}^t\theta_{\lambda+}W_{\jmath,\lambda},~ \imath,\jmath\in\mathcal{I}_N.
\end{align*}
By using the Schur Complement Lemma, we have
\begin{small}
\begin{align}\label{eq:3-17}
-W_{\imath\theta}^{-1}\!+\!\sum^{N}_{\jmath=1}p_{\imath\jmath}\big[\tilde{A}_{\imath\theta\vartheta}^{T}W_{\jmath\theta_+}^{-1}
\tilde{A}_{\imath\theta\vartheta}\big]\!+\!\frac{1}{\sigma}S\!+\!\frac{1}{\sigma}K_{\imath\vartheta}^{T}RK_{\imath\vartheta}<0.
\end{align}
\end{small}
Multiplying both sides of \eqref{eq:3-17} with $\sigma$, substituting $\sigma W_{\imath\theta}^{-1}=P_{\imath\theta}, ~\sigma W_{\jmath\theta_+}^{-1}=P_{\jmath\theta_+}, ~\imath,\jmath\in\mathcal{I}_N$ into \eqref{eq:3-17}, we have
\begin{align}\label{eq:3-18}
-P_{\imath\theta}+\sum^{N}_{\jmath=1}p_{\imath\jmath}\big[\tilde{A}_{\imath\theta\vartheta}^{T}P_{\jmath\theta_+}\tilde{A}_{\imath\theta\vartheta}\big]+S+K_{\imath\vartheta}^{T}RK_{\imath\vartheta}<0.
\end{align}
Based on \eqref{eq:3-9} and \eqref{eq:3-2}, multiplying both sides of \eqref{eq:3-18} with $x^T_{s+\ell|s}$ and its transpose and by using the transposition technique, we can obtain \eqref{eq:3-4-4}. Thus, \eqref{eq:3-4-4} can be guaranteed by \eqref{eq:3-5}, which completes the proof.
\end{proof}

\begin{remark} \label{remk:2}
In order to derive the mode-dependent controllers $K_{\imath\vartheta}$ corresponding to the terminal constraint set $\Sigma_{x,\imath}$, a set of invertible matrices $G_{\imath,\upsilon}$ in regards to the controller fuzzy rules rather than a common invertible matrix $G_{\imath}$ \citep{Xue.M20_E,Xue.M20_H} are introduced into the conditions \eqref{eq:3-10}, which causes less conservatism to the stability condition of the underlying system in the terminal constraint set. In addition, due to $\theta_\hbar(f(x_{s+\ell|s}))\neq\vartheta_\upsilon(g(x_{s+\ell|s}))$ in \eqref{eq:3-3}, the traditional PDC technique cannot be applied. Considering the properties of the membership functions $\theta_\hbar$, $\vartheta_\upsilon$ and $\theta_{\lambda+}$, the information of them is utilized in \eqref{eq:3-15}-\eqref{eq:3-16}.
\end{remark}
\subsection{ Performance Optimization in Terminal Constraint Set}\label{sec:3-3}
In this subsection, we will try to find an upper bound of $\hat{J}_s$ based on \eqref{eq:3-4-4} to design the fuzzy controller in the terminal constraint set.

To ensure the objective remains finite, it is necessary that $x_{\infty|s}=0$, leading to $V_{\infty|s}(x_{\infty|s})=0$. Adding both sides of equation \eqref{eq:3-4-4} from $\ell=0$ to $\infty$ and considering $\lim_{\ell\rightarrow\infty}\mathbb{E}\{V_{s+\ell|s}\}=0$, we obtain
\begin{align}\label{eq:3-19}
\hat{J}_s\leq\mathbb{E}\{V_{s|s}\}=V_{s}=x^{T}_sP_{\imath\theta}x_s,
\end{align}
indicating $\max_{[A_{\imath\theta},B_{\imath\theta}],~\imath\in\mathcal{I}_N} \hat{J}_s\leq x^{T}_sP_{\imath\theta}x_s$.
This sets an upper bound for the objective function across the infinite horizon $\hat{J}_s$. Therefore, our aim regarding the terminal constraint set of MPC is to formulate a series of mode-specific constant control laws, as defined by \eqref{eq:3-2}, to minimize the upper bound $V_{s}$.

Assuming the state at time $s$ lies within the terminal constraint set as defined by \eqref{eq:3-1}, specifically,
\begin{align}\label{ad:3-1}
 x^{T}_sP_{\imath\theta}x_s\leq\sigma.
\end{align}
it is evident from \eqref{eq:3-4-4} that the predicted state at any future time remains within the set $\Sigma_{x,\imath}$ in terms of mean-square, i.e., $x_{s+\ell|s}\in\Sigma_{x,\imath}$. Condition \eqref{ad:3-1} is also referred to as the positive control invariant set condition. Additionally, we derive
\begin{align}\label{adeq:3-2}
 \hat{J}_s\leq\sigma.
\end{align}
Clearly, $\sigma$ serves as an upper bound for $V_{s}$ and consequently for the objective function $\hat{J}_s$.

Next, we will address hard constraints on inputs and states.
\begin{lemma}\label{lm3}
Let the $\breve{u}$ and $\breve{x}$ be given scalars.  Assume that there exist matrices $\mathcal{U}>0$, $\mathcal{X}>0$ and $W_{\imath,\hbar}>0$, a set of matrices $\tilde{K}_{\imath, \upsilon}$ and invertible matrices $G_{\imath,\upsilon}$ for any $\imath\in\mathcal{I}_N$ $\hbar\in\mathcal{I}_T$ and $\upsilon\in\mathcal{I}_V$, such that the following inequalities holds:
\begin{align}
\begin{bmatrix}
-\mathcal{U} & \ast\\
\tilde{K}_{\imath, \upsilon}^T & -\mathbb{G}_{\imath,\hbar\upsilon}\\
\end{bmatrix}&\leq0, \quad
[\mathcal{U}]_{ee}\leq[\breve{u}]_e^2,~e\in\mathcal{I}_{n_u}\label{eq:3-21}\\
 \begin{bmatrix}
-\mathcal{X} & \ast\\
(\Phi W_{\imath,\hbar})^T & -W_{\imath,\hbar}\\
\end{bmatrix}&\leq0, \quad  [\mathcal{X}]_{ff}\leq[\breve{x}]_f^2,~f\in\mathcal{I}_{n_x}\label{eq:3-22}
\end{align}
then hard constraints on  input  $u^{\ast}_{s}$ and state are satisfied, where $[\cdot]_{ff}$$([\cdot]_{ee})$ denotes the $f$th($e$th) diagonal element of ``$\cdot$''.
\end{lemma}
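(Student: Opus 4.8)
The plan is to show the two matrix inequalities, together with the invariant-set condition \eqref{ad:3-1}, imply the pointwise input and state constraints \eqref{eq:3-4-2}--\eqref{eq:3-4-3} for all predicted instants. I would first recall that under the control law \eqref{eq:3-2} the predicted state stays in the terminal constraint set, i.e. $x_{s+\ell|s}^TP_{\imath\theta}x_{s+\ell|s}\le\sigma$, equivalently $x_{s+\ell|s}^T W_{\imath\theta}^{-1}x_{s+\ell|s}\le 1$ after substituting $\sigma W_{\imath\theta}^{-1}=P_{\imath\theta}$; this is the geometric fact I will exploit. For the input constraint, write $[u^{\ast}_{s+\ell|s}]_e = [K_{\imath\vartheta}x_{s+\ell|s}]_e$ and bound $[u^{\ast}_{s+\ell|s}]_e^2$ by a Cauchy--Schwarz / ellipsoidal argument: $\max\{\,[K_{\imath\vartheta}x]_e^2 : x^TW_{\imath\theta}^{-1}x\le1\,\} = [K_{\imath\vartheta}W_{\imath\theta}K_{\imath\vartheta}^T]_{ee}$. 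Hence it suffices that $[K_{\imath\vartheta}W_{\imath\theta}K_{\imath\vartheta}^T]_{ee}\le[\breve u]_e^2$, and I would enforce this via an auxiliary matrix $\mathcal U$ with $K_{\imath\vartheta}W_{\imath\theta}K_{\imath\vartheta}^T\preceq\mathcal U$ and $[\mathcal U]_{ee}\le[\breve u]_e^2$.

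Next I would convert $K_{\imath\vartheta}W_{\imath\theta}K_{\imath\vartheta}^T\preceq\mathcal U$ into the LMI \eqref{eq:3-21}. Using $K_{\imath,\upsilon}=\tilde K_{\imath,\upsilon}G_{\imath,\upsilon}^{-1}$ from \eqref{eq:3-7} and the now-familiar linearization $G_{\imath,\upsilon}^T+G_{\imath,\upsilon}-W_{\imath,\hbar}=\mathbb G_{\imath,\hbar\upsilon}\preceq G_{\imath,\upsilon}^TW_{\imath,\hbar}^{-1}G_{\imath,\upsilon}$ (inequality \eqref{eq:3-10}), a Schur complement on $\begin{bmatrix}-\mathcal U & \tilde K_{\imath,\upsilon}^{T}\\ \tilde K_{\imath,\upsilon} & -\mathbb G_{\imath,\hbar\upsilon}\end{bmatrix}\preceq 0$ gives $\tilde K_{\imath,\upsilon}\mathbb G_{\imath,\hbar\upsilon}^{-1}\tilde K_{\imath,\upsilon}^{T}\preceq\mathcal U$, and then $\tilde K_{\imath,\upsilon}\mathbb G_{\imath,\hbar\upsilon}^{-1}\tilde K_{\imath,\upsilon}^{T}\succeq \tilde K_{\imath,\upsilon}(G_{\imath,\upsilon}^TW_{\imath,\hbar}^{-1}G_{\imath,\upsilon})^{-1}\tilde K_{\imath,\upsilon}^{T}= K_{\imath,\upsilon}W_{\imath,\hbar}K_{\imath,\upsilon}^{T}$ (monotonicity of matrix inversion on the positive cone). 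Summing against the membership weights $\theta_\hbar\vartheta_\upsilon$ — and using convexity of the quadratic form so that $K_{\imath\vartheta}W_{\imath\theta}K_{\imath\vartheta}^{T}\preceq\sum_{\hbar,\upsilon}\theta_\hbar\vartheta_\upsilon K_{\imath,\upsilon}W_{\imath,\hbar}K_{\imath,\upsilon}^{T}$ — closes the chain to $K_{\imath\vartheta}W_{\imath\theta}K_{\imath\vartheta}^{T}\preceq\mathcal U$. Combined with $[\mathcal U]_{ee}\le[\breve u]_e^2$ and the ellipsoidal bound above, this yields $|[u^{\ast}_{s+\ell|s}]_e|\le[\breve u]_e$.

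For the state constraint I would argue analogously: from $x_{s+\ell|s}^TW_{\imath\theta}^{-1}x_{s+\ell|s}\le1$ one gets $\max\{([\Phi]_f x)^2 : x^TW_{\imath\theta}^{-1}x\le1\}=[\Phi W_{\imath\theta}\Phi^T]_{ff}$, so it suffices to impose $\Phi W_{\imath\theta}\Phi^T\preceq\mathcal X$ with $[\mathcal X]_{ff}\le[\breve x]_f^2$. Taking a Schur complement of $\begin{bmatrix}-\mathcal X & (\Phi W_{\imath,\hbar})^{T}\\ \Phi W_{\imath,\hbar} & -W_{\imath,\hbar}\end{bmatrix}\preceq0$ gives $\Phi W_{\imath,\hbar}W_{\imath,\hbar}^{-1}W_{\imath,\hbar}\Phi^{T}=\Phi W_{\imath,\hbar}\Phi^{T}\preceq\mathcal X$, and summing against $\theta_\hbar$ (linearity in $W$) yields $\Phi W_{\imath\theta}\Phi^{T}\preceq\mathcal X$; here it is cleaner than the input case because $W_{\imath\theta}=\sum_\hbar\theta_\hbar W_{\imath,\hbar}$ enters linearly, so no convexity-of-inverse step is needed. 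I expect the main obstacle to be the input part: the feedback gain is $K_{\imath\vartheta}=\sum_\upsilon\vartheta_\upsilon\tilde K_{\imath,\upsilon}G_{\imath,\upsilon}^{-1}$ with the slack matrices $G_{\imath,\upsilon}$ sitting inside an inverse and not matching the membership index of $W_{\imath\theta}$, so the double-sum bookkeeping over $(\hbar,\upsilon)$ and the careful use of both \eqref{eq:3-10} and Jensen-type convexity to collapse $\sum_{\hbar,\upsilon}\theta_\hbar\vartheta_\upsilon K_{\imath,\upsilon}W_{\imath,\hbar}K_{\imath,\upsilon}^{T}$ back onto $K_{\imath\vartheta}W_{\imath\theta}K_{\imath\vartheta}^{T}$ is the delicate point; everything else is a routine Schur-complement and ellipsoid-bounding exercise.
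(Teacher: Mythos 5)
The paper states this lemma without any proof (it is invoked directly in \textbf{OP2}), so there is no in-text argument to compare against; your proposal supplies exactly the standard invariant-ellipsoid argument that the author is implicitly relying on, and it is correct. The three ingredients you identify are the right ones: (i) invariance of the terminal set under \eqref{eq:3-4-4} so that the peak of $[u^{\ast}_{s+\ell|s}]_e^2$ and $([\Phi]_f x_{s+\ell|s})^2$ need only be bounded over the ellipsoid \eqref{ad:3-1}; (ii) the Schur complement of \eqref{eq:3-21} combined with $\mathbb{G}_{\imath,\hbar\upsilon}\preceq G_{\imath,\upsilon}^{T}W_{\imath,\hbar}^{-1}G_{\imath,\upsilon}$ and antitonicity of inversion to recover $K_{\imath,\upsilon}W_{\imath,\hbar}K_{\imath,\upsilon}^{T}\preceq\mathcal{U}$; and (iii) joint convexity of $K\mapsto KWK^{T}$ in the Loewner order to collapse the double membership sum. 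Your observation that the state constraint avoids step (iii) because $W_{\imath\theta}$ enters linearly is also accurate.

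Two small points are worth tightening. First, the Schur-complement step needs $\mathbb{G}_{\imath,\hbar\upsilon}>0$, which does not follow from the non-strict inequality \eqref{eq:3-21} alone but is supplied by the $(1,1)$ block of \eqref{eq:3-5}; since Lemma~\ref{lm3} is only ever used jointly with Lemma~\ref{lm2} in \textbf{OP2}, you should say explicitly that you are borrowing that hypothesis (just as you borrow the invariance). Second, in the paper's notation $W_{\imath\theta}^{-1}$ denotes $\sum_{\hbar}\theta_{\hbar}W_{\imath,\hbar}^{-1}$, which is not the inverse of $W_{\imath\theta}=\sum_{\hbar}\theta_{\hbar}W_{\imath,\hbar}$; the peak over the ellipsoid $\{x:x^{T}(\sum_{\hbar}\theta_{\hbar}W_{\imath,\hbar}^{-1})x\leq1\}$ is therefore $[K(\sum_{\hbar}\theta_{\hbar}W_{\imath,\hbar}^{-1})^{-1}K^{T}]_{ee}$, and you need the harmonic--arithmetic mean inequality $(\sum_{\hbar}\theta_{\hbar}W_{\imath,\hbar}^{-1})^{-1}\preceq\sum_{\hbar}\theta_{\hbar}W_{\imath,\hbar}$ to pass to $[KW_{\imath\theta}K^{T}]_{ee}$. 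This inequality goes in the favourable direction, so your chain survives, but as written you silently identify the two matrices. With those two clarifications the proof is complete.
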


Following the preceding discussions, given condition \eqref{ad:3-1}, we utilize the upper bound $\sigma$ of $\hat{J}_s$ to formulate an optimization problem for deriving the mode-dependent controller gain $K_{\imath\vartheta}$ regarding the terminal constraint set $\Sigma_{x,\imath}$, outlined as:
\begin{align*}
\mathbf{OP2}: &\min_{\substack{W_{\imath,\hbar}, G_{\imath,\upsilon},\tilde{K}_{\imath, \upsilon},\mathcal{U}, \mathcal{X}\\
\imath\in\mathcal{I}_N, \hbar\in\mathcal{I}_T, \upsilon\in\mathcal{I}_V}} \sigma \nonumber\\
\text{s.t.} ~~~~ &\eqref{eq:3-5}, ~\eqref{eq:3-21}, ~\eqref{eq:3-22}.
\end{align*}
\begin{remark}
The controller gain $K_{\imath\vartheta}$ for the terminal constraint set $\Sigma_{x,\imath}$ is computed by solving \textbf{OP2} off-line. Meanwhile, under the positive control invariant condition \eqref{ad:3-1}, the mean-square stability of the closed-loop system \eqref{eq:3-3} is ensured, as proven in \citep{Zhang.B19_A}. However, condition \eqref{ad:3-1} is somewhat conservative since it necessitates the initial presence of the system state within the terminal constraint set, limiting practical application of the MPC strategy. Consequently, in subsequent analysis, efforts are directed towards expanding the feasible state region, termed the initial feasible region. Once the system state resides within this initial region, our objective is to identify admissible control inputs capable of guiding it into the terminal constraint set. This rationale underscores the introduction of perturbation $c_{s+\ell|s, \vartheta}$ in the control law \eqref{eq:2-7}-\eqref{eq:2-9}.
\end{remark}

\section{Perturbation Variable Design}\label{sec:4}
In this section, we present a strategy for designing perturbation variables using an integrated approach from off-line to online settings.

\subsection{Maximizing the Initial Feasible Region}\label{sec:4-1}

Considering equations \eqref{eq:2-3} and \eqref{eq:2-7}-\eqref{eq:2-9}, and defining $\xi_s=[x_s^T ~\eta_s^T]^T\in\mathbb{R}^{2n_x}$, we describe the closed-loop system as follows:
\begin{equation}\label{eq:4-1}
  \left\{\begin{aligned}
  \xi_{s+\ell+1|s}&=\Xi_{\imath\theta\vartheta}\xi_{s+\ell|s}\\
  x_{s+\ell|s}&=\begin{bmatrix}\mathbb{I}_{n_x}  & 0\end{bmatrix}\xi_{s+\ell|s}\\
  u_{s+\ell|s}&=\begin{bmatrix}K_{\imath\vartheta}  & \mathcal{C}_{\imath\vartheta}\end{bmatrix}\xi_{s+\ell|s}
\end{aligned}\right.
\end{equation}
where
\begin{align*}
&\Xi_{\imath\theta\vartheta}\triangleq\begin{bmatrix}\tilde{A}_{\imath\theta\vartheta} & B_{\imath\theta}\mathcal{C}_{\imath\vartheta}\\ 0 & \mathcal{A}_{\imath\vartheta} \end{bmatrix}=\sum_{\hbar=1}^t\sum_{\upsilon=1}^v\theta_\hbar\vartheta_\upsilon\Xi_{\imath, \hbar\upsilon }, \\&\Xi_{\imath, \hbar\upsilon}=\begin{bmatrix}\tilde{\Pi}_{\imath, \hbar\upsilon} & B_{\imath, \hbar}\mathcal{C}_{\imath,\upsilon}\\ 0 & \mathcal{A}_{\imath,\upsilon} \end{bmatrix}.
\end{align*}

Define a set for system \eqref{eq:4-1}  as follows:
\begin{align}\label{eq:4-2}
\Sigma_{\xi,\imath}\triangleq\left\{\xi_{s+\ell|s}|\xi^T_{s+\ell|s}\mathcal{P}_{\imath\vartheta}\xi_{s+\ell|s}\leq1\right\}.
\end{align}
where $\mathcal{P}_{\imath\vartheta}=\sum_{\upsilon=1}^v\vartheta_\upsilon\mathcal{P}_{\imath,\upsilon}$ and $\mathcal{P}_{\imath,\upsilon}$ denotes a collection of symmetric and positive-definite matrices. Subsequently, we demonstrate that $\Sigma_{\xi,\imath}$ serves as a positive control invariant set for \eqref{eq:4-1} and is admissible under constraints \eqref{eq:2-4} and \eqref{eq:2-5}. According to \citep{Song.Y19_R}, the following conditions are formulated to provide theoretical support for a domain attracting constraints for $\xi_{s+\ell|s}$.
\begin{align}
&\Xi_{\imath\theta\vartheta}^T\big(\sum_{\jmath=1}^Np_{\imath\jmath}\mathcal{P}_{\jmath\vartheta_+}\big)\Xi_{\imath\theta\vartheta}-\mathcal{P}_{\imath\vartheta}<0,\label{eq:4-3}\\
&\begin{bmatrix}
-\mathbf{U}& \ast\\
[K_{\imath\vartheta}  ~ \mathcal{C}_{\imath\vartheta}]^T & -\mathcal{P}_{\imath\vartheta}
\end{bmatrix}\leq0, \quad [\mathbf{U}]_{ee}\leq[\breve{u}]_e^2, \quad  e\in\mathcal{I}_{n_u}\label{eq:4-4}\\
&\begin{bmatrix}
-\mathbf{X} & \ast\\
[\Phi ~~0]^T & -\mathcal{P}_{\imath\vartheta}
\end{bmatrix}\leq0, \quad [\mathbf{X}]_{ff}\leq[\breve{x}]_e^2, \quad  f\in\mathcal{I}_{n_x}\label{eq:4-5}
\end{align}
where
\begin{align*}
\mathcal{P}_{\jmath\vartheta_+}=\sum_{\omega=1}^v\vartheta_{\omega+}\mathcal{P}_{\jmath,\omega}, ~\vartheta_{\omega+}=\vartheta_\omega(g(x_{s+\ell+1|s})).
\end{align*}
 The matrices $\mathbf{U}$ and $\mathbf{X}$ are auxiliary and positive.

Up to this point, we formulate an optimization challenge based on \eqref{eq:4-3}-\eqref{eq:4-5} to expand the constraint attraction region. Nevertheless, it is important to note that \eqref{eq:4-3} exhibits non-convexity due to the interaction between $\Xi_{\imath\theta\vartheta}$ and $\mathcal{P}_{\jmath\vartheta_+}$. Therefore, addressing this non-convex issue is essential to ensure solvability of the forthcoming optimization problem. For this purpose, we introduce the following variable transformation.

To demonstrate this, consider $E_{\imath,\upsilon}, ~F_{\imath,\upsilon}\in\mathbb{R}^{n_x\times n_x}$, and positive matrices $M_{\imath,\upsilon}, ~L_{\imath,\upsilon}\in\mathbb{R}^{n_x\times n_x}$ defined by
\begin{align}
&\mathcal{P}_{\imath,\upsilon}=\begin{bmatrix}
                              M_{\imath,\upsilon}^{-1} & M_{\imath,\upsilon}^{-1}E_{\imath,\upsilon}\\
                              E^T_{\imath,\upsilon}M_{\imath,\upsilon}^{-1} & -E^T_{\imath,\upsilon}M_{\imath,\upsilon}^{-1}L_{\imath,\upsilon}F_{\imath,\upsilon}^{-T}\\
                               \end{bmatrix},\label{eq:4-6}\\
&\mathcal{P}_{\imath,\upsilon}^{-1}=\begin{bmatrix}
                              L_{\imath,\upsilon} & F_{\imath,\upsilon}\\
                              F_{\imath,\upsilon}^{T} & -E_{\imath,\upsilon}^{-1}F_{\imath,\upsilon}\\
                               \end{bmatrix}\label{eq:4-7}\\
&X_{\imath,\upsilon}=\mathcal{C}_{\imath,\upsilon}F_{\imath,\upsilon}^{T} , Y_{\imath\jmath,\upsilon\omega}=E_{\jmath,\omega}\mathcal{A}_{\imath,\upsilon}F_{\imath,\upsilon}^{T}.\label{eq:4-8}
\end{align}
Thus, $\mathcal{P}_{\imath,\upsilon}\mathcal{P}_{\imath,\upsilon}^{-1}=\mathbb{I}_{2n_x}$ implies that
\begin{align}\label{eq:4-9}
 E_{\imath,\upsilon}F_{\imath,\upsilon}^T=M_{\imath,\upsilon}-L_{\imath,\upsilon}.
\end{align}
\begin{remark}
 Due to the coupling between variables in \eqref{eq:4-3}-\eqref{eq:4-5}, the method of variable substitution \eqref{eq:4-6}-\eqref{eq:4-8} is adopted to reformulate condition as a convex one for the solvability. This is a natural yet widely used idea of the investigation on MPC problems in the framework of the dynamic output feedback.
\end{remark}

By virtue of \eqref{eq:4-6}-\eqref{eq:4-9}, the following Lemma is presented to help derive the sufficient conditions to \eqref{eq:4-3}-\eqref{eq:4-5}.

\begin{lemma}\label{lm4}
Let the $\tilde{\Pi}_{\imath, \hbar\upsilon}$ be derived by solving \textbf{OP2}.  Assume that there exist matrices $M_{\imath,\upsilon}>0$, $L_{\imath,\upsilon}>0$, $\mathbf{U}>0$ and $\mathbf{X}>0$, a set of matrices $X_{\imath,\upsilon}$ and $Y_{\imath\jmath,\upsilon\omega}$ for any $\imath\in\mathcal{I}_N$, $\hbar\in\mathcal{I}_T$, $\upsilon,\omega\in\mathcal{I}_V$, such that the following inequalities holds:
 \begin{align}
 &\begin{bmatrix}
-\Delta_{\imath,\upsilon}  & \ast & \cdots & \ast \\
\sqrt{p_{\imath1}}\Gamma_{\imath1,\hbar\upsilon\omega}& -\Delta_{1,\omega} & \cdots & \ast\\
\vdots & \vdots & \ddots & \vdots \\
\sqrt{p_{\imath N}}\Gamma_{\imath N,\hbar\upsilon\omega} & 0 & \cdots & -\Delta_{N,\omega}\\
\end{bmatrix}\leq0, \label{eq:4-10}\\
&\begin{bmatrix}
-\mathbf{U}& \ast\\
[K_{\imath,\upsilon}L_{\imath,\upsilon}\!+\!X_{\imath,\upsilon}~K_{\imath,\upsilon}M_{\imath,\upsilon} ]^T & -\Delta_{\imath,\upsilon}
\end{bmatrix}\!\leq\!0,\!~[\mathbf{U}]_{ee}\!\leq\![\breve{u}]_e^2,\label{eq:4-13}\\
&\begin{bmatrix}
-\mathbf{X}& \ast\\
[\Phi L_{\imath,\upsilon} ~\Phi M_{\imath,\upsilon}]^T & -\Delta_{\imath,\upsilon}
\end{bmatrix}\leq0, ~~[\mathbf{X}]_{ff}\leq[\breve{x}]_e^2, \label{eq:4-15}
\end{align}
where
\begin{align*}
\Delta_{\imath,\upsilon}&=\begin{bmatrix}
L_{\imath,\upsilon} & M_{\imath,\upsilon}\\
M_{\imath,\upsilon}  &M_{\imath,\upsilon}\\
\end{bmatrix},  \tilde{\Pi}_{\imath, \hbar\upsilon}=A_{\imath, \hbar}+B_{\imath, \hbar}K_{\imath, \upsilon},
\\
\Gamma_{\imath\jmath,\hbar\upsilon\omega}&=\begin{bmatrix}
                                                           \tilde{\Pi}_{\imath, \hbar\upsilon}L_{\imath,\upsilon}+B_{\imath, \hbar}X_{\imath,\upsilon}    &  \tilde{\Pi}_{\imath, \hbar\upsilon}M_{\imath,\upsilon}\\
                                                            \tilde{\Pi}_{\imath, \hbar\upsilon}L_{\imath,\upsilon}+B_{\imath, \hbar}X_{\imath,\upsilon}+Y_{\imath\jmath,\upsilon\omega}   & \tilde{\Pi}_{\imath, \hbar\upsilon}M_{\imath,\upsilon}\\
                                                        \end{bmatrix}.
\end{align*}
Then, conditions \eqref{eq:4-3}-\eqref{eq:4-5} are satisfied, and mode-dependent estimator gain and dynamic feedback gain are computed through
\begin{align}\label{eq:4-17}
\mathcal{A}_{\imath,\upsilon}=\sum_{\jmath=1}^Np_{\imath\jmath}E_{\jmath,\upsilon}^{-1}Y_{\imath\jmath,\upsilon\upsilon}F_{\imath,\upsilon}^{-T},~\mathcal{C}_{\imath,\upsilon}=X_{\imath,\upsilon}F_{\imath,\upsilon}^{-T}.
\end{align}
\end{lemma}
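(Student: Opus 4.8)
The plan is to show that the LMI conditions \eqref{eq:4-10}--\eqref{eq:4-15} are congruence transformations of \eqref{eq:4-3}--\eqref{eq:4-5} under the structured parametrization \eqref{eq:4-6}--\eqref{eq:4-9}, so that feasibility of the former implies the latter. First I would establish the key block identity: using \eqref{eq:4-7} and the definition of $\mathcal{P}_{\imath,\upsilon}^{-1}$, compute the product $\mathcal{P}_{\imath,\upsilon}^{-1}\,\mathrm{diag}\{\mathbb{I}_{n_x},\mathbb{I}_{n_x}\}$ against a suitable ``partial-inverse'' factor $\Theta_{\imath,\upsilon}=\begin{bmatrix}\mathbb{I}_{n_x} & 0\\ -E_{\imath,\upsilon}^{-1} & F_{\imath,\upsilon}^{-1}\end{bmatrix}$ (or its transpose), which should collapse the congruence of $\mathcal{P}_{\imath,\upsilon}^{-1}$ to the matrix $\Delta_{\imath,\upsilon}=\begin{bmatrix}L_{\imath,\upsilon}&M_{\imath,\upsilon}\\ M_{\imath,\upsilon}&M_{\imath,\upsilon}\end{bmatrix}$ after invoking $E_{\imath,\upsilon}F_{\imath,\upsilon}^T=M_{\imath,\upsilon}-L_{\imath,\upsilon}$ from \eqref{eq:4-9}. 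The point of the second (redundant-looking) block row of $\Delta_{\imath,\upsilon}$ and of $\Gamma_{\imath\jmath,\hbar\upsilon\omega}$ is precisely to encode the two distinct images of $x$ and $\eta$ under this transformation while keeping the dependence on $\mathcal{A}_{\imath,\upsilon}$, $\mathcal{C}_{\imath,\upsilon}$ affine via the substitutions $X_{\imath,\upsilon}=\mathcal{C}_{\imath,\upsilon}F_{\imath,\upsilon}^T$ and $Y_{\imath\jmath,\upsilon\omega}=E_{\jmath,\omega}\mathcal{A}_{\imath,\upsilon}F_{\imath,\upsilon}^T$.

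Next I would handle \eqref{eq:4-3}: rewrite it via Schur complement as $\begin{bmatrix}-\mathcal{P}_{\imath\vartheta}^{-1} & \ast\\ \sqrt{p_{\imath\jmath}}\,\Xi_{\imath\theta\vartheta}\mathcal{P}_{\imath\vartheta}^{-1} & -\mathcal{P}_{\jmath\vartheta_+}^{-1}\end{bmatrix}$-type block matrix stacked over $\jmath\in\mathcal{I}_N$ (after the standard trick of multiplying by $\mathcal{P}_{\imath\vartheta}^{-1}$ on both sides and using $\mathcal{P}_{\jmath\vartheta_+}^{-1}\leq$ convex-combination bound, or directly bounding $\mathcal{P}_{\jmath\vartheta_+}$ from above by the inverse of the convex combination of $\mathcal{P}_{\jmath,\omega}^{-1}$ via Jensen — this is where the inner sum over $\omega$ appears). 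Then perform the congruence by $\mathrm{diag}\{\Theta_{\imath,\upsilon}, \Theta_{1,\omega}, \dots, \Theta_{N,\omega}\}$ (transposed on one side): the $(1,1)$ block becomes $-\Delta_{\imath,\upsilon}$, the $(\jmath+1,\jmath+1)$ block becomes $-\Delta_{\jmath,\omega}$, and the off-diagonal block $\Theta_{\jmath,\omega}^{\,\prime}\,\Xi_{\imath,\hbar\upsilon}\mathcal{P}_{\imath,\upsilon}^{-1}\Theta_{\imath,\upsilon}^{T}$ should reduce to exactly $\Gamma_{\imath\jmath,\hbar\upsilon\omega}$ once $\Xi_{\imath,\hbar\upsilon}=\begin{bmatrix}\tilde\Pi_{\imath,\hbar\upsilon}&B_{\imath,\hbar}\mathcal{C}_{\imath,\upsilon}\\ 0&\mathcal{A}_{\imath,\upsilon}\end{bmatrix}$ is multiplied out and the substitutions \eqref{eq:4-8} are applied together with $E_{\imath,\upsilon}\mathcal{A}_{\imath,\upsilon}F_{\imath,\upsilon}^T$-type terms being re-expressed through $Y_{\imath\jmath,\upsilon\omega}$ (this is why $Y$ carries both the current and successor mode indices $\imath,\jmath$ and rule indices $\upsilon,\omega$). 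Finally, summing \eqref{eq:4-10} against $\theta_\hbar\vartheta_\upsilon\vartheta_{\omega+}$ (and the successor membership weights) and undoing the Schur complement recovers \eqref{eq:4-3}; the constraint inequalities \eqref{eq:4-13}, \eqref{eq:4-15} follow from the same congruence applied to \eqref{eq:4-4}, \eqref{eq:4-5}, noting $[K_{\imath\vartheta}~\mathcal{C}_{\imath\vartheta}]\mathcal{P}_{\imath\vartheta}^{-1}\Theta_{\imath,\upsilon}^{T}$ yields the row blocks $[K_{\imath,\upsilon}L_{\imath,\upsilon}+X_{\imath,\upsilon}~~K_{\imath,\upsilon}M_{\imath,\upsilon}]$ and $[\Phi L_{\imath,\upsilon}~~\Phi M_{\imath,\upsilon}]$ respectively. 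The recovery formulas \eqref{eq:4-17} then drop out by inverting the definitions \eqref{eq:4-8}: $\mathcal{C}_{\imath,\upsilon}=X_{\imath,\upsilon}F_{\imath,\upsilon}^{-T}$ directly, and $\mathcal{A}_{\imath,\upsilon}=\sum_\jmath p_{\imath\jmath}E_{\jmath,\upsilon}^{-1}Y_{\imath\jmath,\upsilon\upsilon}F_{\imath,\upsilon}^{-T}$ after the weighted sum over $\jmath$ collapses the successor-mode dependence.

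The main obstacle I anticipate is verifying that the off-diagonal congruence block reduces \emph{exactly} to $\Gamma_{\imath\jmath,\hbar\upsilon\omega}$ — in particular, that the appearance of two different mode/rule indices on $Y$ is consistent, i.e.\ that $E_{\jmath,\omega}\mathcal{A}_{\imath,\upsilon}F_{\imath,\upsilon}^T$ is the quantity that naturally shows up when the successor factor $\Theta_{\jmath,\omega}$ meets the plant factor $\mathcal{A}_{\imath,\upsilon}$ carried in $\Xi_{\imath,\hbar\upsilon}$, and that this is compatible with the single-index recovery \eqref{eq:4-17} only after the $\sum_\jmath p_{\imath\jmath}$ averaging. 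A secondary subtlety is the handling of the double/triple fuzzy-membership sums ($\theta_\hbar$, $\theta_{\lambda+}$, $\vartheta_\upsilon$, $\vartheta_{\omega+}$) under the IMP (non-PDC) setting: as in Remark \ref{remk:2}, one cannot collapse $\theta$ and $\vartheta$, so the bound must be argued term-by-term over all index combinations, and the convexity/Jensen step that produces $\mathcal{P}_{\jmath\vartheta_+}^{-1}\geq(\sum_\omega\vartheta_{\omega+}\mathcal{P}_{\jmath,\omega}^{-1})^{-1}$-type estimates needs to be stated carefully so that the per-$\omega$ LMIs \eqref{eq:4-10} genuinely imply the membership-weighted inequality \eqref{eq:4-3}. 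Everything else is routine Schur-complement bookkeeping and block-matrix multiplication.
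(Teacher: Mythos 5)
Your plan follows essentially the same route as the paper's proof: reduce \eqref{eq:4-3}--\eqref{eq:4-5} to per-rule inequalities in $\Xi_{\imath,\hbar\upsilon}$ and $\mathcal{P}_{\imath,\upsilon}$ via the membership-function properties, apply the Schur complement to \eqref{eq:4-18}, perform a congruence transformation built from the blocks of the partitioned $\mathcal{P}_{\imath,\upsilon}^{-1}$, and linearize through the substitutions \eqref{eq:4-8}, with the recovery formulas \eqref{eq:4-17} obtained by inverting them. The one concrete detail that would fail as written is your candidate congruence factor: the paper uses
\begin{align*}
\Theta_{\imath,\upsilon}=\begin{bmatrix} L_{\imath,\upsilon} & M_{\imath,\upsilon}\\ F_{\imath,\upsilon}^{T} & 0\end{bmatrix},
\end{align*}
applied to $\mathcal{P}_{\imath,\upsilon}$ itself (the $(1,1)$ block of the Schur-complemented inequality is $-\mathcal{P}_{\imath,\upsilon}$, not $-\mathcal{P}_{\imath,\upsilon}^{-1}$); one checks $\mathcal{P}_{\imath,\upsilon}\Theta_{\imath,\upsilon}=\begin{bmatrix}\mathbb{I} & \mathbb{I}\\ 0 & E_{\imath,\upsilon}^{T}\end{bmatrix}$ and hence $\Theta_{\imath,\upsilon}^{T}\mathcal{P}_{\imath,\upsilon}\Theta_{\imath,\upsilon}=\Delta_{\imath,\upsilon}$ using $F_{\imath,\upsilon}E_{\imath,\upsilon}^{T}=M_{\imath,\upsilon}-L_{\imath,\upsilon}$ from \eqref{eq:4-9}. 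Your lower-triangular ``partial-inverse'' factor acting on $\mathcal{P}_{\imath,\upsilon}^{-1}$ does not collapse to $\Delta_{\imath,\upsilon}$; since you hedged on its exact form and correctly identified the target identity and the role of \eqref{eq:4-9}, this is a repairable slip rather than a wrong approach. Your closing worry about the membership-weighted summation under IMP is legitimate but is in fact glossed over by the paper itself, which simply asserts that the per-index inequalities \eqref{eq:4-18}--\eqref{eq:4-20} imply \eqref{eq:4-3}--\eqref{eq:4-5} by the standard-membership property; no Jensen-type bound on $\mathcal{P}_{\jmath\vartheta_+}^{-1}$ is invoked there.
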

\begin{proof}
Firstly, according to the property of the standard membership function of the underlying T-S FMJSs, conditions \eqref{eq:4-3}-\eqref{eq:4-5} can be guaranteed by the following inequalities:
\begin{align}
&\Xi_{\imath, \hbar\upsilon }^T\big(\sum_{\jmath=1}^Np_{\imath\jmath}\mathcal{P}_{\jmath,\omega}\big)\Xi_{\imath, \hbar\upsilon }-\mathcal{P}_{\imath,\upsilon}<0,\label{eq:4-18}\\
&\begin{bmatrix}
-\mathbf{U}& \ast\\
[K_{\imath,\upsilon}  ~ \mathcal{C}_{\imath,\upsilon}]^T & -\mathcal{P}_{\imath,\upsilon}
\end{bmatrix}\leq0, \quad [\mathbf{U}]_{ee}\leq[\breve{u}]_e^2, \quad  e\in\mathcal{I}_{n_u},\label{eq:4-19}\\
&\begin{bmatrix}
-\mathbf{X} & \ast\\
[\Phi ~~0]^T & -\mathcal{P}_{\imath,\upsilon}
\end{bmatrix}\leq0, \quad [\mathbf{X}]_{ff}\leq[\breve{x}]_e^2, \quad  f\in\mathcal{I}_{n_x}.\label{eq:4-20}
\end{align}
By applying the Schur Complement Lemma, \eqref{eq:4-18} is valid if and only if
\begin{align}\label{eq:4-21}
\begin{bmatrix}
-\mathcal{P}_{\imath,\upsilon} & \ast & \cdots & \ast \\
\sqrt{p_{\imath1}}\mathcal{P}_{1,\omega}\Xi_{\imath, \hbar\upsilon } & -\mathcal{P}_{1,\omega} & \cdots & \ast\\
\vdots & \vdots & \ddots & \vdots \\
\sqrt{p_{\imath N}}\mathcal{P}_{N,\omega}\Xi_{\imath, \hbar\upsilon } & 0 & \cdots & -\mathcal{P}_{N,\omega}\\
\end{bmatrix}<0,
\end{align}
Subsequently, multiplying both sides of inequality \eqref{eq:4-21} by $\text{diag}\big\{\Theta_{\imath,\upsilon}^T,~\Theta_{1,\omega}^T,~\ldots,~\Theta_{N,\omega}^T\big\}$ and its transpose with
\begin{align*}
\Theta_{\imath,\upsilon}=\begin{bmatrix}
             L_{\imath,\upsilon} & M_{\imath,\upsilon}\\
             F_{\imath,\upsilon}^{T} & 0\\
              \end{bmatrix},
\end{align*}
yields \eqref{eq:4-10}. Therefore, \eqref{eq:4-3} follows from \eqref{eq:4-10}.

Next, multiplying both sides of the first inequality in \eqref{eq:4-19} by $\text{diag}\big\{\mathbb{I}_{n_u},\Theta_{\imath,\upsilon}^T\big\}$ and its transpose gives us \eqref{eq:4-13}. Consequently, \eqref{eq:4-4} follows from \eqref{eq:4-13}.

Finally, transforming the state constraint \eqref{eq:4-20} into \eqref{eq:4-15} can be achieved by multiplying both sides of the first inequality by $\text{diag}\big\{\mathbb{I}_{n_x},\Theta_{\imath,\upsilon}^T\big\}$ and its transpose.
\end{proof}

Before proceeding, we will demonstrate that if the augmented state $\xi$ resides within the set $\Sigma_{\xi,\imath}$, the state $x$ of the system can reach the terminal constraint set $\Sigma_{x,\imath}$ based on conditions \eqref{eq:4-3}-\eqref{eq:4-5}. Consequently, according to Lemma \ref{lm2}, the system state $x$ can be guided towards the equilibrium point using the control input $u^{\ast}_{s}$.

Prior to advancing, the projection of $\Sigma_{\xi,\imath}$ onto the subspace of $x$ is provided as
\begin{align}\label{eq:4-26}
\Sigma_{x\xi,\imath}\triangleq\left\{x_{s+\ell|s}|x^T_{s+\ell|s}L_{\imath\vartheta}^{-1}x_{s+\ell|s}\leq1\right\},
\end{align}
where $L_{\imath\vartheta}=\sum_{\upsilon=1}^v\vartheta_\upsilon L_{\imath,\upsilon}$.

The following Lemma can be derived similarly to \citep{Nguyen.H20_O}.
\begin{lemma} \citep{Nguyen.H20_O} (Thm.2)\label{adlm:1}
Among all solutions satisfying \eqref{eq:4-10}-\eqref{eq:4-15}, the set $\Sigma_{\xi,\imath}$ can be optimized such that $\Sigma_{x,\imath}\subseteq\Sigma_{x\xi,\imath}$.
\end{lemma}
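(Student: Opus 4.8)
The plan is to establish the set inclusion $\Sigma_{x,\imath}\subseteq\Sigma_{x\xi,\imath}$ by showing that the ellipsoidal matrix governing $\Sigma_{x\xi,\imath}$, namely $L_{\imath\vartheta}^{-1}$, can be made no larger (in the positive-definite order) than $\sigma^{-1}P_{\imath\theta}$, the matrix governing the terminal set $\Sigma_{x,\imath}$ from \eqref{eq:3-1}. Since $\Sigma_{x,\imath}=\{x\mid x^TP_{\imath\theta}x\le\sigma\}=\{x\mid x^T(\sigma^{-1}P_{\imath\theta})x\le1\}$, the containment $\Sigma_{x,\imath}\subseteq\Sigma_{x\xi,\imath}$ is equivalent to $L_{\imath\vartheta}^{-1}\le\sigma^{-1}P_{\imath\theta}$, i.e.\ $\sigma\, L_{\imath\vartheta}^{-1}\le P_{\imath\theta}$, or after inversion $L_{\imath\vartheta}\ge\sigma\, P_{\imath\theta}^{-1}=W_{\imath\theta}$ using the substitution $\sigma W_{\imath\theta}^{-1}=P_{\imath\theta}$ already introduced in the proof of Lemma~\ref{lm2}. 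The first step, then, is to reduce the desired geometric statement to this LMI-type matrix inequality on the free variables $L_{\imath,\upsilon}$ and $W_{\imath,\hbar}$.

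Next I would append this inequality $L_{\imath,\upsilon}\ge W_{\imath,\hbar}$ (for all relevant index combinations, so that it survives the fuzzy-convex-combination step exactly as in \eqref{eq:3-15}--\eqref{eq:3-16}) to the constraint set of the optimization problem built from \eqref{eq:4-10}--\eqref{eq:4-15}. The key observation, mirroring \citep{Nguyen.H20_O} (Thm.~2), is that this added constraint is compatible with the remaining ones: the block $\Delta_{\imath,\upsilon}$ in Lemma~\ref{lm4} only couples $L_{\imath,\upsilon}$ with $M_{\imath,\upsilon}$ and with the estimator/feedback surrogates $X_{\imath,\upsilon}$, $Y_{\imath\jmath,\upsilon\omega}$, none of which appear in the terminal-set conditions of Lemma~\ref{lm2}; hence one can always take $L_{\imath,\upsilon}$ large enough (equivalently shrink $\mathcal{P}_{\imath,\upsilon}$ in the $x$-block) without violating \eqref{eq:4-10}, \eqref{eq:4-13}, \eqref{eq:4-15}, at worst by enlarging the auxiliary matrices $M_{\imath,\upsilon}$, $\mathbf U$, $\mathbf X$ correspondingly. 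I would make this precise by exhibiting, from any feasible point of \eqref{eq:4-10}--\eqref{eq:4-15}, a modified feasible point with $L_{\imath,\upsilon}\succeq W_{\imath,\hbar}$, thereby certifying that "among all solutions" the optimized $\Sigma_{\xi,\imath}$ indeed satisfies $\Sigma_{x,\imath}\subseteq\Sigma_{x\xi,\imath}$.

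Finally, I would translate back: with $L_{\imath\vartheta}\ge W_{\imath\theta}=\sigma P_{\imath\theta}^{-1}$, monotonicity of matrix inversion gives $L_{\imath\vartheta}^{-1}\le\sigma^{-1}P_{\imath\theta}$, so any $x$ with $x^TP_{\imath\theta}x\le\sigma$ satisfies $x^TL_{\imath\vartheta}^{-1}x\le1$, which is exactly $x\in\Sigma_{x\xi,\imath}$; this is the claimed inclusion. The main obstacle I anticipate is the second step — rigorously arguing the \emph{compatibility} of the extra constraint with \eqref{eq:4-10} in the face of the fuzzy double-/triple-sums and the Markov expectation $\sum_\jmath p_{\imath\jmath}\mathcal{P}_{\jmath\vartheta_+}$, since one must ensure that inflating $L_{\imath,\upsilon}$ (and hence $\Delta_{\imath,\upsilon}$ on the diagonal blocks of \eqref{eq:4-10}) does not destroy negative-definiteness; handling this cleanly is precisely what is deferred to the cited construction in \citep{Nguyen.H20_O}, which is why the statement is phrased as "derived similarly to \citep{Nguyen.H20_O}."
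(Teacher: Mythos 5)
The paper does not actually prove this lemma; it imports it by citation to \citep{Nguyen.H20_O}, so there is no in-paper argument to compare against. Your first step is correct and is the right reduction: since $\Sigma_{x,\imath}=\{x\mid x^T(\sigma^{-1}P_{\imath\theta})x\le 1\}$ and $\Sigma_{x\xi,\imath}=\{x\mid x^TL_{\imath\vartheta}^{-1}x\le 1\}$, the inclusion is equivalent to $L_{\imath\vartheta}^{-1}\preceq\sigma^{-1}P_{\imath\theta}$, i.e.\ $L_{\imath\vartheta}\succeq W_{\imath\theta}$, and in the IMP setting this must indeed be imposed rule-wise, $L_{\imath,\upsilon}\succeq W_{\imath,\hbar}$ for all $(\hbar,\upsilon)$, so that it survives the convex combinations.

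The gap is in your second step. The claim that one can ``take $L_{\imath,\upsilon}$ large enough without violating \eqref{eq:4-10}'' by a monotonicity argument does not go through: $L_{\imath,\upsilon}$ enters the off-diagonal blocks $\Gamma_{\imath\jmath,\hbar\upsilon\omega}$ of \eqref{eq:4-10} through $\tilde{\Pi}_{\imath,\hbar\upsilon}L_{\imath,\upsilon}+B_{\imath,\hbar}X_{\imath,\upsilon}$, so inflating $L_{\imath,\upsilon}$ grows the coupling terms at the same time as it grows the diagonal blocks $-\Delta_{\imath,\upsilon}$, and negative definiteness is not preserved in any order-theoretic way. What actually closes the argument (and what \citep{Nguyen.H20_O} and the earlier Kouvaritakis et al.\ construction do) is an explicit feasibility witness rather than a perturbation of an arbitrary feasible point: switch the prediction dynamics off by taking $X_{\imath,\upsilon}=0$ and $Y_{\imath\jmath,\upsilon\omega}=0$ (equivalently $\mathcal{C}_{\imath,\upsilon}=0$, $\mathcal{A}_{\imath,\upsilon}=0$) and build $L_{\imath,\upsilon}$ from the $W_{\imath,\hbar}$ delivered by \textbf{OP2}, with $M_{\imath,\upsilon}$ a slight inflation of $L_{\imath,\upsilon}$ so that $\Delta_{\imath,\upsilon}\succ 0$. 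Under this choice \eqref{eq:4-10} collapses, after a congruence, to the already-verified terminal condition \eqref{eq:3-5}, and \eqref{eq:4-13}--\eqref{eq:4-15} collapse to \eqref{eq:3-21}--\eqref{eq:3-22}; the projection of the resulting $\Sigma_{\xi,\imath}$ is then exactly $\Sigma_{x,\imath}$, which establishes the ``can be optimized such that'' existence claim. Note also that the lemma is only an existence statement: maximizing $\log\det L_{\imath,\upsilon}$ in \textbf{OP3} does not by itself force $L_{\imath,\upsilon}\succeq W_{\imath,\hbar}$ at the optimum, so if you want every solution returned by \textbf{OP3} to satisfy the inclusion you must append your LMI $L_{\imath,\upsilon}\succeq W_{\imath,\hbar}$ as an explicit constraint --- whose feasibility is certified by the same zero-perturbation witness.
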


According to Lemma \ref{adlm:1} and condition \eqref{eq:4-10}, it is evident that the predicted future system state $x_{s+\ell|s}$ can enter $\Sigma_{x,\imath}$ using admissible control laws defined by \eqref{eq:4-1}, provided the initial augmented state $\xi_s$ belongs to $\Sigma_{\xi,\imath}$ \citep{Kouvaritakis.B00_E}.

Thus far, the off-line maximization of $\Sigma_{x\xi,\imath}$ across $\mathcal{A}_{\imath\vartheta}$, $\mathcal{C}_{\imath\vartheta}$, $\mathcal{P}_{\imath\vartheta}$, $\mathbf{U}$ and $\mathbf{X}$, subject to \eqref{eq:4-3}-\eqref{eq:4-5}, can be achieved by solving the following optimization problem:
\begin{align*}
\mathbf{OP3}: &\min_{\substack{M_{\imath,\upsilon}, ~L_{\imath,\upsilon}, ~X_{\imath,\upsilon}, ~Y_{\imath\jmath,\upsilon},~\mathbf{U},~\mathbf{X},\\~\imath\in\mathcal{I}_N, \upsilon\in\mathcal{I}_V}} -\log\det L_{\imath,\upsilon}\nonumber\\
&\text{s.t.} ~~~~ \eqref{eq:4-10}-\eqref{eq:4-15}.
\end{align*}
\begin{remark}
Upon obtaining the variables $M_{\imath,\upsilon}$ and $L_{\imath,\upsilon}$, utilize the matrix eigenvalue decomposition condition \eqref{eq:4-9} to determine the respective values of $E_{\imath,\upsilon}$ and $F_{\imath,\upsilon}$. Typically, this decomposition is singular.
 \end{remark}
 \subsection{The Online Optimization Problem of Controller State $\eta_{s}$}\label{sec:4-2}
From the acquired feasible region $\Sigma_{x\xi,\imath}$, within this section, our focus lies in formulating an online optimization problem using the cost function $J^{\infty}_s$. This problem is subject to the initial state requirement (i.e., $\xi_s\in\Sigma_{\xi,\imath}$), ensuring that a series of permissible control strategies derived from this optimization can guide the system state $x$ towards the terminal constraint set.

To attain the stated objective, concerning {\bf OP1}, the current task involves resolving the ``min-max'' problem of the prediction cost
$J^{\infty}_s$ using the DPO input \eqref{eq:2-7} to ascertain a sequence of disturbance variables $\{c_{s+\ell|s, \vartheta}, \ell=0,1,2,\ldots,\infty\}$.  Now, we examine the subsequent quadratic function
 \begin{align}\label{eq:4-26-1}
V_{s+\ell|s}(\xi_{s+\ell|s})=\xi^T_{s+\ell|s}\Psi_{\imath\vartheta}\xi_{s+\ell|s},
 \end{align}
 where $\Psi_{\imath\vartheta}=\sum_{\upsilon=1}^v\vartheta_\upsilon\Psi_{\imath,\upsilon}\in\mathbb{R}^{2n_x\times 2n_x}$, the matrix $\Psi_{\imath,\upsilon}=\text{diag}\{\Psi_{\imath xx,\upsilon},\Psi_{\imath\eta\eta,\upsilon}\}>0$ satisfying
  \begin{align}\label{eq:4-27}
\mathbb{E}\{\Delta V_{s+\ell|s}(\xi_{s+\ell|s})\}\!\leq\!-\left(\|x_{s+\ell|s}\|^{2}_{S}+\|u_{s+\ell|s}\|^{2}_{R}\right).
  \end{align}
Adding both sides of \eqref{eq:4-17} from $\ell=0$ to $\infty$ under the condition $\lim_{\ell\rightarrow\infty}\mathbb{E}\{V_{s+\ell|s}\}=0$, we have $J^{\infty}_s\leq V_{s|s}(\xi_{s|s})$. Then $V_{s|s}(\xi_{s|s})$ is the upper bound of  $J^{\infty}_s$. The following lemma transforms constraint \eqref{eq:4-17}  into a convex expression.
\begin{lemma}
Condition \eqref{eq:4-27} holds true when there exist matrices $\Psi_{\imath xx,\upsilon}>0$ and $\Psi_{\imath\eta\eta,\upsilon}>0$, for any $\imath\in\mathcal{I}_N$, $\upsilon,\omega\in\mathcal{I}_V$,  satisfying the subsequent inequality
\begin{small}
\begin{align}\label{eq:4-28}
\Xi_{\imath, \hbar\upsilon}^T\left(\sum_{\jmath=1}^Np_{\imath\jmath}\Psi_{\jmath,\omega}\right)\Xi_{\imath, \hbar\upsilon}\!-\!\Psi_{\imath,\upsilon}+\mathbb{E}^TS\mathbb{E}+\Lambda_{\imath,\upsilon}^TR\Lambda_{\imath,\upsilon}<0
\end{align}
\end{small}
where $\Lambda_{\imath,\upsilon}=\begin{bmatrix}K_{\imath,\upsilon}  & \mathcal{C}_{\imath,\upsilon}\end{bmatrix}$, $\mathbb{E}=\begin{bmatrix}\mathbb{I}_{n_x}  & 0\end{bmatrix}$.
\end{lemma}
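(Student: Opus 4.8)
The plan is to show that the summed Lyapunov decrease inequality \eqref{eq:4-27} at the level of the fuzzy-blended closed loop \eqref{eq:4-1} is implied by the vertex inequality \eqref{eq:4-28}, exploiting the convex-combination structure of the membership functions exactly as was done for Lemma \ref{lm2}. First I would write out $\mathbb{E}\{\Delta V_{s+\ell|s}(\xi_{s+\ell|s})\}$ along \eqref{eq:4-1}: since $\xi_{s+\ell+1|s}=\Xi_{\imath\theta\vartheta}\xi_{s+\ell|s}$ and $\zeta_{s+\ell+1}=\jmath$ occurs with probability $p_{\imath\jmath}$, the expected difference equals $\xi^T_{s+\ell|s}\big[\Xi_{\imath\theta\vartheta}^T\big(\sum_{\jmath=1}^N p_{\imath\jmath}\Psi_{\jmath\vartheta_+}\big)\Xi_{\imath\theta\vartheta}-\Psi_{\imath\vartheta}\big]\xi_{s+\ell|s}$. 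The stage cost term is $\|x_{s+\ell|s}\|^2_S+\|u_{s+\ell|s}\|^2_R=\xi^T_{s+\ell|s}\big(\mathbb{E}^TS\mathbb{E}+\Lambda_{\imath\vartheta}^TR\Lambda_{\imath\vartheta}\big)\xi_{s+\ell|s}$ using the output/input maps in \eqref{eq:4-1} with $\Lambda_{\imath\vartheta}=[\,K_{\imath\vartheta}\ \ \mathcal{C}_{\imath\vartheta}\,]$. Hence \eqref{eq:4-27} is equivalent to the matrix inequality
\begin{align*}
\Xi_{\imath\theta\vartheta}^T\Big(\sum_{\jmath=1}^N p_{\imath\jmath}\Psi_{\jmath\vartheta_+}\Big)\Xi_{\imath\theta\vartheta}-\Psi_{\imath\vartheta}+\mathbb{E}^TS\mathbb{E}+\Lambda_{\imath\vartheta}^TR\Lambda_{\imath\vartheta}<0.
\end{align*}

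Next I would reduce this blended inequality to the vertex inequality \eqref{eq:4-28}. Expanding $\Xi_{\imath\theta\vartheta}=\sum_{\hbar,\upsilon}\theta_\hbar\vartheta_\upsilon\Xi_{\imath,\hbar\upsilon}$, $\Psi_{\imath\vartheta}=\sum_\upsilon\vartheta_\upsilon\Psi_{\imath,\upsilon}$, $\Psi_{\jmath\vartheta_+}=\sum_\omega\vartheta_{\omega+}\Psi_{\jmath,\omega}$ and $\Lambda_{\imath\vartheta}=\sum_\upsilon\vartheta_\upsilon\Lambda_{\imath,\upsilon}$, the quadratic term $\Xi_{\imath\theta\vartheta}^T(\cdot)\Xi_{\imath\theta\vartheta}$ is jointly quadratic in the $\Xi$-blend. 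To linearize it I would pass through the Schur-complement form (as in \eqref{eq:4-21}) so that every occurrence of the memberships is affine, then invoke the standard property $\theta_\hbar\ge0$, $\vartheta_\upsilon\ge0$, $\vartheta_{\omega+}\ge0$ with $\sum_\hbar\theta_\hbar=\sum_\upsilon\vartheta_\upsilon=\sum_\omega\vartheta_{\omega+}=1$ to conclude that a convex combination $\sum_{\hbar,\upsilon,\omega}\theta_\hbar\vartheta_\upsilon\vartheta_{\omega+}(\cdot)_{\imath,\hbar\upsilon\omega}<0$ of the vertex matrices holds whenever each vertex matrix is negative definite; the convexity is legitimate because $\mathbb{E}^TS\mathbb{E}$ is membership-independent and $\Lambda_{\imath\vartheta}^TR\Lambda_{\imath\vartheta}$ only needs its Schur-complement linear form $[\,-R^{-1}\ \ *;\ \Lambda_{\imath\vartheta}\ \ -\text{(block)}\,]$, which is affine in $\vartheta$. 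Undoing the Schur complement on each vertex then yields precisely \eqref{eq:4-28}, so \eqref{eq:4-28} for all $\imath\in\mathcal{I}_N$, $\hbar\in\mathcal{I}_T$, $\upsilon,\omega\in\mathcal{I}_V$ implies \eqref{eq:4-27}. Finally, pre- and post-multiplying the resulting blended inequality by $\xi^T_{s+\ell|s}$ and $\xi_{s+\ell|s}$ recovers the scalar form \eqref{eq:4-27}, and summing from $\ell=0$ to $\infty$ with $\lim_{\ell\to\infty}\mathbb{E}\{V_{s+\ell|s}\}=0$ gives $J^\infty_s\le V_{s|s}(\xi_{s|s})$ as already noted.

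The diagonal block structure $\Psi_{\imath,\upsilon}=\text{diag}\{\Psi_{\imath xx,\upsilon},\Psi_{\imath\eta\eta,\upsilon}\}$ plays a supporting role: together with the block-upper-triangular form of $\Xi_{\imath,\hbar\upsilon}$ it keeps the cross terms in $\Xi_{\imath,\hbar\upsilon}^T\Psi_{\jmath,\omega}\Xi_{\imath,\hbar\upsilon}$ manageable and guarantees $\Psi_{\imath\vartheta}>0$, i.e. $V$ is a genuine Lyapunov-like function; I would mention this but not belabor it. The main obstacle I anticipate is the membership-mismatch issue flagged in Remark \ref{remk:2}: because $\theta_\hbar(f(x_{s+\ell|s}))$, $\vartheta_\upsilon(g(x_{s+\ell|s}))$ and $\vartheta_{\omega+}(g(x_{s+\ell+1|s}))$ are three genuinely distinct sets of weights, no PDC cancellation is available, so the reduction must be carried out carefully over the full triple index $(\hbar,\upsilon,\omega)$ and one must make sure the Schur-complemented matrix really is affine (not merely polynomial) in each membership family before invoking convexity — this is exactly the linearization step, and getting the quadratic-in-$\Xi$ term into a jointly-affine form via the block-matrix (Schur) representation is the crux of the argument.
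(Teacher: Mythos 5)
Your argument is correct. Note that the paper states this lemma without proof, so there is nothing to compare against directly; however, your route --- expressing $\mathbb{E}\{\Delta V_{s+\ell|s}\}$ plus the stage cost as a quadratic form in $\xi_{s+\ell|s}$, Schur-complementing so that every block is affine in each of the three membership families $\theta_\hbar$, $\vartheta_\upsilon$, $\vartheta_{\omega+}$ separately, taking the triple convex combination of the vertex matrices, and undoing the Schur complement --- is precisely the template the paper itself uses to prove Lemma \ref{lm2} (cf.\ \eqref{eq:3-15}--\eqref{eq:3-16}) and to pass from \eqref{eq:4-18} to \eqref{eq:4-3} in Lemma \ref{lm4}, and you correctly flag the one genuinely delicate point, namely that the Schur form is only multi-affine rather than jointly affine in the memberships (alternatively, matrix convexity of $X\mapsto X^TQX$ for $Q>0$ lets you skip the Schur step entirely). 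The only cosmetic remark is that the lemma's quantifier should also range over $\hbar\in\mathcal{I}_T$, which you implicitly and correctly assume.
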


Note that $\Psi_{\imath\vartheta}$ is used to provide an upper bound of $J^{\infty}_s$. Thus, $\Psi_{\imath\vartheta}$ can be calculated by solving the following optimization problem:
\begin{align*}
\mathbf{OP4}: &\min_{\substack{\Psi_{\imath xx,\upsilon},\Psi_{\imath\eta\eta,\upsilon} ,\\\imath\in\mathcal{I}_N, \upsilon,\omega\in\mathcal{I}_V }}&\text{trace}(\Psi_{\imath xx,\upsilon})+\text{trace}(\Psi_{\imath\eta\eta,\upsilon})   \nonumber\\
&~~~~~\text{s.t.} ~~~ \eqref{eq:4-28}.
\end{align*}

Next, we formulate an online problem to determine $\eta_s$. Given that the system state $x_s$ is measured at time instant $s$, the first component of the upper bound, expressed as $x^T_s\Psi_{\imath xx\vartheta}x_s$ is known. This implies that minimizing $V_{s|s}(\xi_{s|s})$ is solely dependent on $\eta^T_s\Psi_{\imath\eta\eta\vartheta}\eta_s$. Thus, we establish the following optimization problem related to $\eta_{s}$: i) ensuring the initial state satisfies $x_0\in\Sigma_{x\xi, \zeta_0}$; ii) determining the necessary perturbations to guide the system into the terminal constraint set $\Sigma_{x,\imath}$.
\begin{align}
\mathbf{OP5}: &\min_{\substack{\eta_{s}}}~~~~ \eta^T_s\Psi_{\imath\eta\eta\vartheta}\eta_s  \nonumber\\
&\text{s.t.} ~~~ \begin{bmatrix}
-1 & \ast \\
\xi_{s} & -\mathcal{P}_{\imath\vartheta}^{-1}\\
\end{bmatrix}\leq0.\label{eq:4-29}
\end{align}

Using the Schur Complement Lemma, it is evident from \eqref{eq:4-29} that $\xi^T_{s}\mathcal{P}_{\imath\vartheta}\xi_{s}\leq1$, ensuring the positive control invariant condition $\xi_s\in\Sigma_{\xi,\imath}$. Notably, at time instant $s$, a perturbation $\eta_{s}$ can be determined by solving an optimization problem based on the current state $x_s$. Subsequently, a series of perturbation variables $\{c_{s+\ell|s,\vartheta}, \ell=0,1,2,\ldots,\infty\}$ are computed using the dynamic predictions \eqref{eq:2-8}-\eqref{eq:2-9}. However, only the initial component $c_{s,\vartheta}$ 	
  influences $u_s$ and affects the plant operation. At the next step, a fresh perturbation is derived from a new optimization based on the updated state $x_{s+1}$. This iterative process continues until the system reaches the terminal constraint set, illustrating the moving horizon optimization principle in MPC.

\subsection{Stability Analysis and Algorithm}\label{sec:4-3}
In this subsection, the following theorem guarantees the solvability of our DPO-MPC algorithm at time step $s>0$, , provided that the optimization problem is solvable at $s=0$,  and it establishes the mean-square stability of the closed-loop system.
\begin{theorem}
Under the conditions that the off-line optimization problems \textbf{OP2}, \textbf{OP3}, and \textbf{OP4} are feasible for FMJS \eqref{eq:2-3}, the online optimization problem \textbf{OP5} remains feasible for all future times, given any initial mode $\zeta_0$ and the initial state $x_0\in\Sigma_{x\xi,\zeta_0}$, This ensures that the system state can be directed into the terminal constraint set $\Sigma_{x,\imath}$ and the designed controller \eqref{eq:2-7} stabilizes the closed-loop system in a mean square sense.
\end{theorem}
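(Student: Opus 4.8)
The plan is to establish the theorem in two parts: first recursive feasibility of \textbf{OP5}, and then mean-square stability. For feasibility, I would argue by induction on the time step $s$. The base case is the hypothesis $x_0\in\Sigma_{x\xi,\zeta_0}$; since \textbf{OP3} is feasible off-line, the set $\Sigma_{\xi,\zeta_0}$ is nonempty and one can always pick $\eta_0$ so that $\xi_0=[x_0^T~\eta_0^T]^T\in\Sigma_{\xi,\zeta_0}$ (e.g.\ $\eta_0=0$ works when $L_{\imath\vartheta}^{-1}\preceq\mathcal{P}_{\imath\vartheta}$-block structure from \eqref{eq:4-6} places $x_0$ inside the projection $\Sigma_{x\xi,\zeta_0}$), so the constraint \eqref{eq:4-29} is satisfiable at $s=0$. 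For the inductive step, suppose \textbf{OP5} is feasible at time $s$ with minimizer $\eta_s$, so $\xi_s\in\Sigma_{\xi,\zeta_s}$. The key is that $\Sigma_{\xi,\imath}$ is a positive control invariant set for the closed-loop dynamics \eqref{eq:4-1}: from Lemma~\ref{lm4} (more precisely conditions \eqref{eq:4-3}--\eqref{eq:4-5} which it implies), and using $\sum_\jmath p_{\imath\jmath}=1$ together with the membership-function convexity argument already used in Lemmas~\ref{lm2} and \ref{lm4}, one gets $\mathbb{E}\{\xi_{s+1}^T\mathcal{P}_{\zeta_{s+1}\vartheta_+}\xi_{s+1}\mid\xi_s,\zeta_s\}\le \xi_s^T\mathcal{P}_{\zeta_s\vartheta}\xi_s\le 1$, and in fact the mode-wise inequality \eqref{eq:4-3} gives $\xi_{s+1}^T\mathcal{P}_{\jmath\vartheta_+}\xi_{s+1}<1$ for every realized successor mode $\jmath$. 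Hence $\xi_{s+1}\in\Sigma_{\xi,\zeta_{s+1}}$ for every admissible jump, which means the candidate $\eta_{s+1}$ obtained by propagating \eqref{eq:2-9} (the ``tail'' of the previous plan) is feasible for \textbf{OP5} at $s+1$; therefore \textbf{OP5} is feasible at $s+1$. This closes the induction.

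The second part is stability. With \textbf{OP5} feasible at every step, let $V_{s}=\xi_s^T\Psi_{\zeta_s\vartheta}\xi_s$ be the optimal value-function surrogate from \textbf{OP4}. By the lemma preceding \textbf{OP4} (condition \eqref{eq:4-28}), along the closed-loop trajectory one has, for the \emph{candidate} solution carried over from step $s$,
\begin{align*}
\mathbb{E}\{V_{s+1}^{\mathrm{cand}}\mid x_s,\zeta_s\}-V_s \le -\bigl(\|x_s\|_S^2+\|u_s\|_R^2\bigr),
\end{align*}
and since \textbf{OP5} minimizes $\eta^T_{s+1}\Psi_{\zeta_{s+1}\eta\eta\vartheta}\eta_{s+1}$ while the $x$-part of $V$ is fixed by the measured state, the \emph{optimal} $V_{s+1}$ satisfies $V_{s+1}^{\mathrm{opt}}\le V_{s+1}^{\mathrm{cand}}$; hence
\begin{align*}
\mathbb{E}\{V_{s+1}\mid x_s,\zeta_s\}-V_s\le-\|x_s\|_S^2\le-\lambda_{\min}(S)\|x_s\|^2 .
\end{align*}
Taking total expectation, summing over $s=0,\dots,T$, using $V_s\ge0$ and $V_0<\infty$ (finite because $\xi_0$ and $\Psi$ are finite), and letting $T\to\infty$ yields $\mathbb{E}_{x_0,\zeta_0}\{\sum_{s=0}^\infty\|x_s\|^2\}\le V_0/\lambda_{\min}(S)<\infty$, which is exactly Definition~1 of mean-square stability. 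Finally, that the state is driven into $\Sigma_{x,\imath}$ follows from Lemma~\ref{adlm:1}: $\xi_s\in\Sigma_{\xi,\zeta_s}$ implies $x_s\in\Sigma_{x\xi,\zeta_s}\supseteq\Sigma_{x,\zeta_s}$, and once the convergent trajectory (from the summability just proved) enters the neighborhood $\Sigma_{x,\imath}$, the perturbation-free law \eqref{eq:3-2} of Lemma~\ref{lm2} keeps it there via the positive-invariance condition \eqref{ad:3-1}; one then switches off the perturbations and invokes \textbf{OP2} for the remaining regulation.

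The main obstacle I anticipate is the inductive step for feasibility, specifically handling the mismatch between the plant membership functions $\theta_\hbar$ and the controller membership functions $\vartheta_\upsilon$, and between $\vartheta_\upsilon$ and the ``shifted'' $\vartheta_{\omega+}$ evaluated at the successor state. One must verify carefully that the mode-wise and rule-wise LMIs \eqref{eq:4-10}--\eqref{eq:4-15} (hence \eqref{eq:4-3}--\eqref{eq:4-5}) are strong enough to yield $\xi_{s+1}\in\Sigma_{\xi,\jmath}$ for \emph{every} successor mode $\jmath$ and \emph{every} admissible membership realization, not merely in conditional expectation --- otherwise the carried-over $\eta_{s+1}$ need not be feasible for the deterministic constraint \eqref{eq:4-29}. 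This is exactly why \eqref{eq:4-3} is posed mode-wise with the summed matrix $\sum_\jmath p_{\imath\jmath}\mathcal{P}_{\jmath\vartheta_+}$ on the left and $\mathcal{P}_{\imath\vartheta}$ (not a weighted average) on the right, so that $\xi^T_{s+1}\mathcal{P}_{\jmath\vartheta_+}\xi_{s+1}\le\xi^T_{s+1}\bigl(\sum_\jmath p_{\imath\jmath}\mathcal{P}_{\jmath\vartheta_+}\bigr)\xi_{s+1}<\xi^T_s\mathcal{P}_{\imath\vartheta}\xi_s\le1$ using $p_{\imath\jmath}\le1$ and positivity; I would make this chain of inequalities explicit, as it is the linchpin of the whole argument.
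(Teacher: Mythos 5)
Your overall architecture---induction for recursive feasibility of \textbf{OP5} followed by a Lyapunov/summability argument for mean-square stability---matches the paper's proof, and your stability half is actually more direct than the paper's: you telescope the decrease condition \eqref{eq:4-27} for $V_s=\xi_s^T\Psi_{\zeta_s\vartheta}\xi_s$ from \textbf{OP4} along the whole closed-loop trajectory (using optimality of \textbf{OP5} to dominate the candidate), whereas the paper switches to $V(x_s)=x_s^TP_{\zeta_s\theta}x_s$ from \textbf{OP2} and argues the decrease only after entry into the terminal constraint set.

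However, the ``linchpin'' inequality you announce at the end is false. From $p_{\imath\jmath}\le 1$ and $\mathcal{P}_{\jmath'\vartheta_+}\succ 0$ you may only conclude $p_{\imath\jmath}\mathcal{P}_{\jmath\vartheta_+}\preceq\sum_{\jmath'}p_{\imath\jmath'}\mathcal{P}_{\jmath'\vartheta_+}$, hence $\xi^T_{s+1}\mathcal{P}_{\jmath\vartheta_+}\xi_{s+1}\le p_{\imath\jmath}^{-1}\,\xi^T_{s+1}\bigl(\sum_{\jmath'}p_{\imath\jmath'}\mathcal{P}_{\jmath'\vartheta_+}\bigr)\xi_{s+1}$, with an extra factor $p_{\imath\jmath}^{-1}\ge 1$ that destroys the bound by $1$. (Take $N=2$, $p_{\imath 1}=p_{\imath 2}=\frac{1}{2}$: the claim $\mathcal{P}_{1}\preceq\frac{1}{2}(\mathcal{P}_{1}+\mathcal{P}_{2})$ is equivalent to $\mathcal{P}_{1}\preceq\mathcal{P}_{2}$, which need not hold.) Consequently \eqref{eq:4-3} yields only $\mathbb{E}\{\xi_{s+1}^T\mathcal{P}_{\zeta_{s+1}\vartheta_+}\xi_{s+1}\mid\xi_s,\zeta_s=\imath\}\le\xi_s^T\mathcal{P}_{\imath\vartheta}\xi_s\le 1$, i.e.\ invariance of $\Sigma_{\xi,\cdot}$ in conditional mean; your pathwise claim that $\xi_{s+1}\in\Sigma_{\xi,\jmath}$ for \emph{every} realized successor mode does not follow, so the carried-over $\eta_{s+1}$ need not satisfy the deterministic constraint \eqref{eq:4-29}. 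To obtain what your induction requires you would need the stronger mode-wise condition $\Xi^T_{\imath,\hbar\upsilon}\mathcal{P}_{\jmath,\omega}\Xi_{\imath,\hbar\upsilon}-\mathcal{P}_{\imath,\upsilon}<0$ for each $\jmath$ with $p_{\imath\jmath}>0$, which is not what \eqref{eq:4-10} encodes. For what it is worth, the paper's own proof does not close this gap either---it simply asserts, citing Lemma \ref{lm4}, that $\Sigma_{\xi,\zeta_0}$ is an attraction domain for $\xi_{1|0}$ and that feasibility at $s=1$ follows---but it does not commit to the incorrect matrix inequality. You correctly identified the crux; the fix you propose is not valid as stated.
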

\begin{proof}
The proof proceeds in two main steps. Initially, we verify the feasibility of the online optimization problem \textbf{OP5}, ensuring the system state can be guided into the terminal constraint set $\Sigma_{x,\imath}$. Subsequently, we demonstrate the guaranteed stability of the closed-loop system.

1) Recursive feasibility: For \textbf{OP5}, the validity of \eqref{eq:4-29} hinges on the state $\xi_s$. Therefore, to establish the feasibility of \textbf{OP5}, we must demonstrate that \eqref{eq:4-29} holds given $x_0 \in \Sigma_{x\xi,\zeta_0}$. The condition $x_0 \in \Sigma_{x\xi,\zeta_0}$ implies $\xi_0 \in \Sigma_{\xi,\zeta_0}$, ensuring \eqref{eq:4-29} at initial time $s=0$. Consequently, leveraging Lemma \ref{lm4}, $\Sigma_{\xi,\zeta_0}$ serves as an attraction domain for $\xi_{1|0}$. Thus, $\xi_{1|0} = \xi_1$ for some $\Xi_{\imath\theta\vartheta}$, confirming the feasibility of \textbf{OP5} at $s=1$. This procedure extends recursively to subsequent time instances, thereby ensuring the recursive feasibility of \textbf{OP5} under the initial feasibility assumption. Moreover, Lemma \ref{adlm:1} guarantees that a set of permissible control inputs $u_s$ can guide the predicted state $x_{s+\ell|s}$ into the terminal constraint set $\Sigma_{x,\imath}$ given its initial state in $\Sigma_{x\xi,\imath}$. Therefore, due to the guaranteed recursive feasibility, it follows straightforwardly that such permissible control inputs $u_s$ can ultimately steer the state $x_{s}$ into the terminal constraint set $\Sigma_{x,\imath}$.

2) Mean square stability: The system's mean-square stability under fuzzy feedback gain $K_{\imath\vartheta}$ after entry into the terminal constraint set needs to be established. We select a quadratic function candidate $V(x_s) \triangleq x_s^T P_{\zeta_s\theta} x_s$, where $P_{\zeta_s\theta}$ is obtained from \textbf{OP2}. Given that the state $x_{s+1} = x_{s+1|s}$ for some $[A_{\imath\theta}, B_{\imath\theta}, K_{\imath\theta}]$ and feasibility of \textbf{OP2}, it follows from \eqref{eq:3-4-4} that $\mathbb{E}\{\Delta V_{s}\} = \mathbb{E}\{V(x_{s+1})\} - V(x_s) \leq 0$ holds under constraints \eqref{eq:2-4}-\eqref{eq:2-5}. Thus, $\mathbb{E}\{V(x_{s+1})\} - V(x_s) \leq 0$ holds for all $x_s \in \Sigma_{x\xi,\imath}$. Consequently, all states within $\Sigma_{x\xi,\imath}$ satisfy $\lim_{s\rightarrow\infty} \mathbb{E}\big\{\|x_s\|^2\big\} = 0$, ensuring the mean square stability of the closed-loop system.
\end{proof}

To address the couplings within the optimization challenge, we propose a two-part approach: the \emph{Offline} and \emph{Online} segments for DPO-MPC starting from the specified initial state $\zeta_0$.
\begin{table}[!ht]
\label{Alg1}
\begin{center}
 \begin{tabular}{cl}
\toprule
\multicolumn{2}{l}{\normalsize\bf Algorithm~1:~\textbf{\emph{Offline Part}}}\\
\midrule
 {\it  1.} &Begin at time instant $s_0$ and initialize parameters.\\
 {\it  2.} &Resolve optimization issue \textbf{OP2}, utilizing \eqref{eq:3-7} to obtain \\&the mode-dependent controller gain $K_{\imath\vartheta}$ within \\& the terminal constraint set $\Sigma_{x,\imath}$.\\
 {\it  3.} &Address optimization problem \textbf{OP3}.\\
 {\it  4.} &Utilize condition \eqref{eq:4-9} to derive values for $E_{\imath,\upsilon}$ and $F_{\imath,\upsilon}$.\\
 {\it  5.} &Compute $\mathcal{P}_{\imath,\upsilon}$, $\mathcal{P}_{\imath,\upsilon}^{-1}$, $\mathcal{A}_{\imath,\upsilon}$, $\mathcal{C}_{\imath,\upsilon}$ as per \\&\eqref{eq:4-6}-\eqref{eq:4-8} and \eqref{eq:4-17}.\\
 {\it  6.} & Resolve optimization problem \textbf{OP4}.\\
 \toprule
\multicolumn{2}{l}{\normalsize\bf Algorithm~2:~\textbf{\emph{Online Part}}}\\
\midrule
 {\it  1.} &At each time step $s$, verify if $x_s\in\Sigma_{x,\imath}$. If so, compute the control input\\
 &as $u_s=K_{\imath\vartheta}x_s$. Otherwise, determine $\eta_{s}$ by solving\\
 &optimization problem \textbf{OP5}, then calculate the control input as\\
 & $u_{s}=K_{\imath\vartheta}x_{s}+\mathcal{C}_{\imath\vartheta}\eta_{s}$.\\
 {\it  2.} &Apply $u_s$ to the system. Increment $s$ to $s+1$ and return to Step 1.\\
 \bottomrule
\end{tabular}
\end{center}
\end{table}

\section{Illustrative example}\label{sec:5}
\subsection{Example}
In this section, a single-link robot arm system \citep{Xue.M20_E} is used to verify the effectiveness of the control strategy derived in the previous section, in which dynamic systems is presented by
\begin{equation*}
  \ddot{\delta_t}=-\frac{M_{\zeta_t}gL}{J_{\zeta_t}}\sin(\delta_t)-\frac{R}{J_{\zeta_t}}\dot{\delta_t}+\frac{1}{J_{\zeta_t}}u_t
\end{equation*}
where $\delta_t$ represents the angle position of the arm. $g$, $L$, and $R$ are the acceleration of gravity, the arm's length, and the viscous friction's coefficient, respectively. In this simulation, relevant parameters are set as follows: $g=9.81$, $L=0.5$, and $R=2$. The parameter payload mass $M_{\zeta_t}$ and moment of inertia $J_{\zeta_t}$ for the system are presented as: $M_1=J_1=1$,  $M_1=J_1=5$, $M_1=J_1=10$, and $M_1=J_1=15$. We are able to define $\mathbb{P}$ for the Markov jump matrix as
\begin{align*}
\mathbb{P}=
\begin{bmatrix}
0.2 &0.25 &0.4 &0.15\\0.1 &0.2 &0.3 &0.4\\0.3 &0.2 &0.4 &0.1\\0.4 &0.2& 0.2 &0.2\\
\end{bmatrix}
\end{align*}

Define $x_s=[\delta_s, ~\dot{\delta_s}]^T\triangleq[x_s(1) ,~x_s(2)]^T$ as state variables. The single-link robot arm system model can be discretized by the Euler approximation and reconstructed as (Sampling period $T = 0.1$)

\emph{Plant Rule} $1$: IF $x_s(1)$ is about $0$ rad,
\begin{equation*}
\text{THEN} ~\begin{aligned}
     x_{s+1}=\begin{bmatrix} 1& T\\ -\frac{TM_{\zeta_s}gL}{J_{\zeta_s}}& 1-\frac{TR}{J_{\zeta_s}}\end{bmatrix}
     x_{s}+\begin{bmatrix} 0 \\ \frac{T}{J_{\zeta_s}}\end{bmatrix}u_{s}
\end{aligned}
\end{equation*}

\emph{Plant Rule} $2$: IF $x_s(1)$ is about $\pi$ rad or  $-\pi$ rad,
\begin{equation*}
\text{THEN} ~\begin{aligned}
     x_{s+1}=\begin{bmatrix} 1& T\\ -\frac{\beta TM_{\zeta_s}gL}{J_{\zeta_s}}& 1-\frac{TR}{J_{\zeta_s}}\end{bmatrix}
     x_{s}+\begin{bmatrix} 0 \\ \frac{T}{J_{\zeta_s}}\end{bmatrix}u_{s}
\end{aligned}
\end{equation*}
where $\beta=10^{-2}/\pi$, $\zeta_s=1,2,3,4$.

As for this fuzzy model,  the standard membership function is chosen as
\begin{align*}
  &\theta_1(x_s(1))=\left\{\begin{aligned}&\frac{\sin(x_s(1))-\beta x_s(1)}{(1-\beta)x_s(1)},~~ x_s(1)\neq0\\&1, ~~~~~~~~~~~~~~~~~~~~~~~~~~x_s(1)=0\end{aligned}\right.\\
&\theta_2(x_s(1))=1-\theta_1(x_s(1)).
\end{align*}
Set hard constraints on states and inputs as $-\pi\leq x_s(1)\leq\pi$, $-40\leq u_s\leq40$. The initial values of the state and the mode are given by $x_0 = [1.5, -1.5]^T$ , $\zeta_0 = 1$. The weighting
matrices are chosen as $S = \mathbb{I}_2$, $R =0.01$.
\subsection{Solving Off-line Part }
Through the application of \emph{Lemma 1} and \emph{Lemma 2}, the controller gains specific to each operational mode can be derived by solving \textbf{OP2} individually.

\begin{align*}
&K_{1,1}=\begin{bmatrix} -8.1860& -8.7263\end{bmatrix}, ~~~K_{1,2}=\begin{bmatrix} -8.2532 & -9.0878\end{bmatrix}\\
&K_{2,1}=\begin{bmatrix} -21.7988& -25.6431\end{bmatrix}, K_{2,2}=\begin{bmatrix} -23.8203&-31.5950\end{bmatrix}\\
&K_{3,1}=\begin{bmatrix} -20.7920& -17.1278\end{bmatrix}, K_{3,2}=\begin{bmatrix} -16.2692&-21.2610\end{bmatrix}\\
&K_{4,1}=\begin{bmatrix} -17.1860& -12.2843\end{bmatrix}, K_{4,2}=\begin{bmatrix} -10.9302&-13.9034\end{bmatrix}
\end{align*}

Simultaneously, the terminal constraint set $\Sigma_{x,\imath}$ fcorresponding to each mode is depicted by the dotted line in Fig.~\ref{Fig.1}. Subsequently, the expanded initial feasible region $\Sigma_{x\xi,\imath}$ is determined through the resolution of \textbf{OP3}.

The advantages of the proposed DPO-MPC are illustrated through comparative simulations in \citep{Dong.Y20_E1,Zhang.B19_A}. As shown in Fig.~\ref{Fig.1}, we can see that the initial state $x_0$ belongs to $\Sigma_{x\xi,1}$ but outside of $\Sigma_{x,1}$, and the range of the region $\Sigma_{x\xi,1}$ is obviously much wider compared with $\Sigma_{x,1}$. On the other hand, the initial feasible region $\Upsilon_{x\xi,1}$ derived from the EMPC approach exhibits less satisfactory outcomes when contrasted with the DPO-MPC strategy. This is due to constraints that exert considerable influence in specific directions, stemming from the constraints on their optimization flexibility. This underscores the significant enhancement in practical applicability offered by the proposed algorithm.
\begin{figure}[!ht]
 \centering
 {  \begin{minipage}[t]{0.4\textwidth}
 \begin{center}
 \includegraphics[height=5.5cm,width=7.5cm]{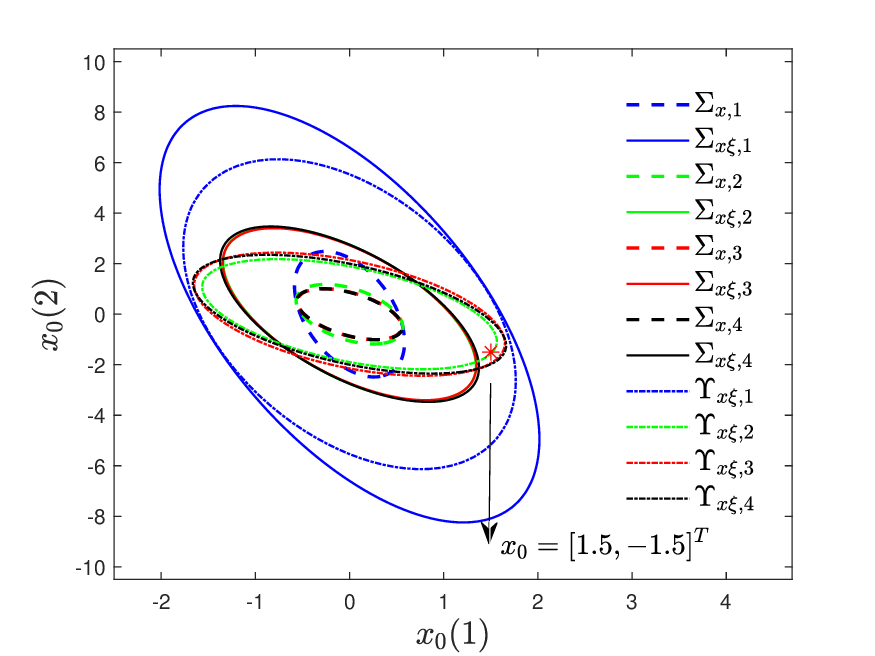}
 \end{center}
 \end{minipage}}
 \caption{Comparison of initial feasible region before and after expansion}\label{Fig.1}
 \end{figure}

\subsection{Solving On-line Part}
In this subsection, the benefits of the proposed DPO-MPC are illustrated through comparative simulations. To demonstrate the comparison with EMPC and online MPC, the average performance from 100 different experiments is utilized due to random jumps in FMJSs. Algorithm 2 can be effectively executed using the \emph{yalmip.master} toolbox on the MATLAB R2016a platform featuring an Intel(R) Core(TM) i5-3470 CPU @3.20GHz.

From TABLE \ref{tab:1}, it is evident that the online computational cost of DPO-MPC is markedly reduced in contrast to the EMPC and Online robust MPC (RMPC) strategies. While \textbf{OP5} requires online solving, its constraints are considerably fewer compared to the RMPC approach. Furthermore, unlike EMPC, the online \textbf{OP5} constraints involve only a single perturbation $\eta_s$ rather than a sequence, thereby reducing the computational burden to some extent. The simulation outcomes are depicted in Fig.~\ref{Fig.4}-Fig.~\ref{Fig.6}
\begin{table}[!ht]
\begin{center}
  \caption{Comparison of the average solver-time using \emph{yalmip} for DPO-MPC, EMPC, and online MPC}\label{tab:1}
 \begin{tabular}{m{1in}m{1in}m{1in}m{1in}}
\toprule
\toprule
\multicolumn{1}{c}{Method} &\multicolumn{1}{c}{DPO-MPC }&\multicolumn{1}{c}{EMPC}&\multicolumn{1}{c}{Online-MPC}\\
\midrule
\multicolumn{1}{c}{Time(s)}&\multicolumn{1}{c}{0.1395}&\multicolumn{1}{c}{0.5660}&\multicolumn{1}{c}{0.8974}\\
\bottomrule
\bottomrule
\end{tabular}
\end{center}
\end{table}
 \begin{figure}[!ht]
 \centering
 {  \begin{minipage}[t]{0.4\textwidth}
 \begin{center}
 \includegraphics[height=5.5cm,width=7.5cm]{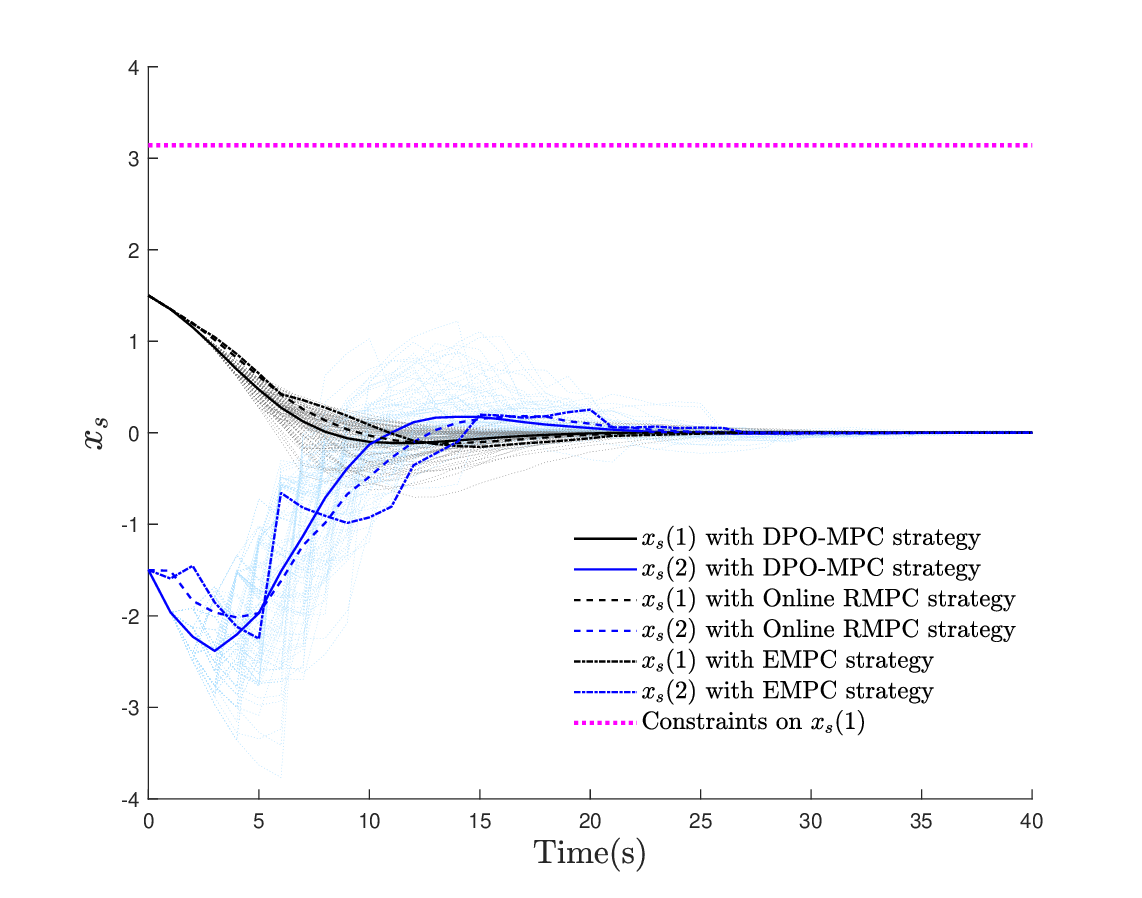}
 \end{center}
 \end{minipage}}
 \caption{The system state evolution $x_s$ with three different MPC strategy. (Closed-loop evolution for 100 times different experiments.)}\label{Fig.4}
 \end{figure}

\begin{figure}[!ht]
 \centering
 {  \begin{minipage}[t]{0.4\textwidth}
 \begin{center}
 \includegraphics[height=5.5cm,width=7.5cm]{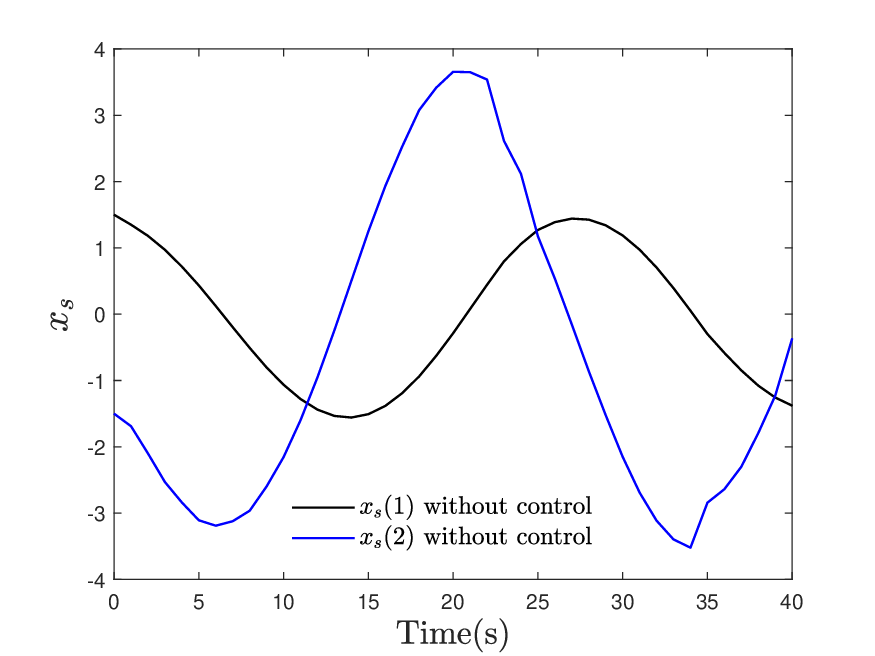}
 \end{center}
 \end{minipage}}
 \caption{The system states evolution $x_s$ without control.}\label{Fig.5}
 \end{figure}
  \begin{figure}[!ht]
 \centering
 {  \begin{minipage}[t]{0.4\textwidth}
 \begin{center}
 \includegraphics[height=5.5cm,width=7.5cm]{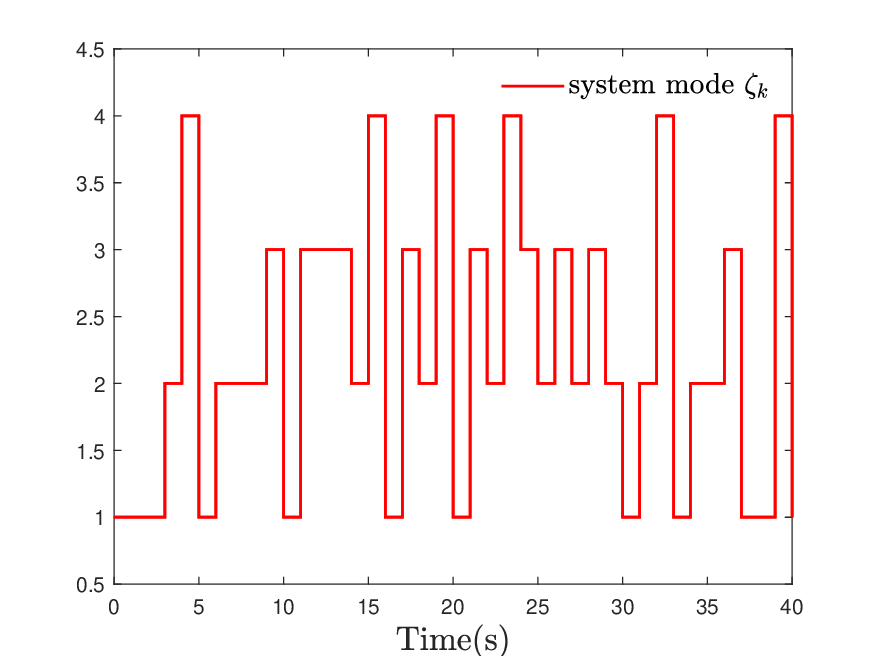}
 \end{center}
 \end{minipage}}
 \caption{A possible sequence of system modes}\label{Fig.2}
 \end{figure}
 \begin{figure}[!ht]
 \centering
 {  \begin{minipage}[t]{0.4\textwidth}
 \begin{center}
 \includegraphics[height=5.5cm,width=7.5cm]{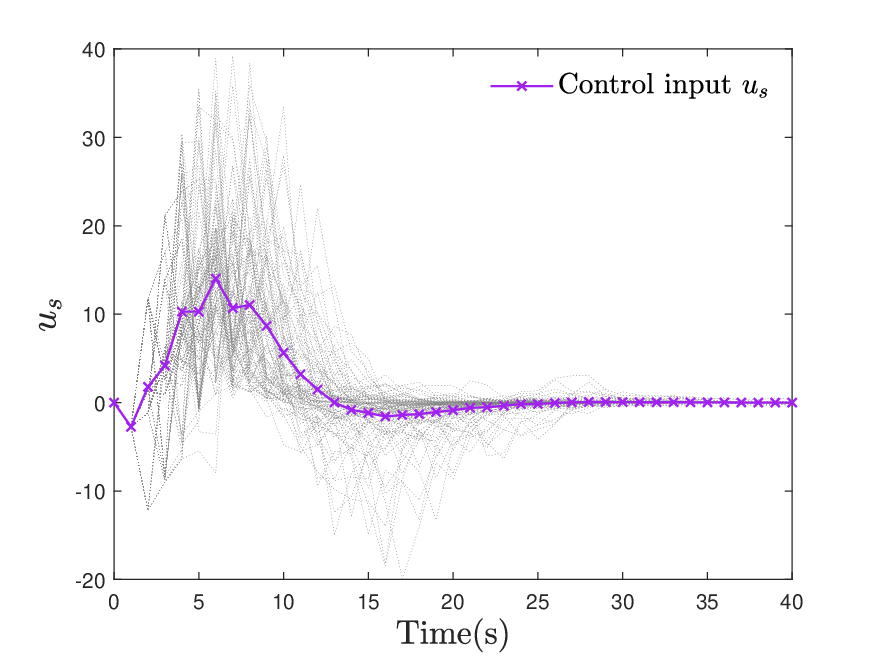}
 \end{center}
 \end{minipage}}
 \caption{The evolution of control inputs. (Closed-loop evolution for 100 times different experiments.)}\label{Fig.6}
 \end{figure}
\subsection{Discussion and Analysis}
From Fig.~\ref{Fig.4}, it can be concluded that the control performances of the compared strategies exhibit similarity. However, a detailed examination reveals that the control effectiveness of the DPO-MPC approach slightly surpasses that of the EMPC method, approaching parity with the online RMPC strategy. Thus, this paper validates the efficacy of the proposed methodology. Furthermore, to demonstrate the effectiveness of the MPC strategy, we employ a divergent open-loop system, depicted in Fig.~\ref{Fig.5}. The sequence of system modes is illustrated in Fig.~\ref{Fig.2}. Finally, Fig.~\ref{Fig.6} portrays the trajectory of the system's control input.
\section{Conclusion}\label{sec:6}
This study has investigated the DPO-MPC problem for a discrete-time class of FMJSs featuring hard constraints. By utilizing the IPM approach, a suite of mode-dependent fuzzy predictive controllers is devised to ensure system stability. The optimized predictive dynamics MPC strategy significantly expands the initial feasible region and reduces online computational overhead, enhancing algorithm practicality. The design methodology is synthesized through an ``off-line to online'' approach, establishing a comprehensive framework for analyzing algorithm feasibility and mean-square stability in the underlying MJS. Theoretical findings are validated via a single-link robot arm system. Future research directions include extending these results to encompass more complex systems with advanced network-induced phenomena, as explored in \citep{Zhao.D20_A}.
\section*{Acknowledgments}
This work was supported in part by the China Postdoctoral Science Foundation under Grants 2022TQ0208, 2023M732226.

\end{document}